\definecolor{purple}{rgb}{1, 0, 0.4}
\definecolor{dgreen}{rgb}{0, 0.4, 0}
\newcommand{\thl}[1]{\textcolor{mygreen}{#1}}
\newcommand{\hide}[1]{}
  \newcommand{\frag}[2]{}
  \newcommand{\fragr}[2]{}
  \newcommand{\frag}[2]{\psfrag{#1}[cm][cm]{\small #2}}
  \newcommand{\fragr}[2]{\psfrag{#1}[br][br]{\small #2}}
\definecolor{myred}{rgb}{0.95, 0.01, 0.01}
\definecolor{mygreen}{rgb}{0.2, 0.55, 0.4}
\definecolor{myblue}{rgb}{0.01, 0.01, 0.9}
\newcommand{\todo}[1]{}
\newcommand{\R}{\mathbb{R}}
\newcommand{\F}{\mathcal{F}}
\newcommand{\pathcost}{c}
\newcommand{\paths}{\mathcal{P}}
\newcommand{\route}{\pi}
\newcommand{\std}{ \sigma }
\newcommand{\stot}{s\mbox{--}t}
\newcommand{\dsh}{\mbox{--}}
\newcommand{\stdev}{deviation}
\newcommand{\ntrl}{single-minded}
\newcommand{\divpar}{diversity parameter}
\newcommand{\G}{\mathcal{G}}
\date{}
\begin{document}

\title{When Does Diversity of User Preferences Improve Outcomes in Selfish Routing?\thanks{Part of this work was completed while the authors were visiting the Simons Institute for the Theory of Computing, Berkeley, CA. Research partially supported by  NSF grants CCF-1527568, CCF-1216103, CCF-1350823, CCF-1331863, CCF 1733832.} }

\author{Richard Cole\inst{1} \and Thanasis Lianeas\inst{2}
\and Evdokia Nikolova\inst{2}}

\institute{New York University
\and
University of Texas at Austin
}

\maketitle

\begin{abstract}
We seek to understand when heterogeneity in user preferences yields improved outcomes in terms of overall cost. That this might be hoped for is based on the common belief that diversity is advantageous in many settings. We investigate this in the context of routing. Our main result is a sharp characterization of the network settings in which diversity always helps, versus those in which it is sometimes harmful.

Specifically, we consider routing games, where diversity arises in the way that users trade-off two criteria (such as time and money, or, in the case of stochastic delays, expectation and variance of delay). 
Our main contributions---a conceptual and a technical one--- are the following:\\
1) A participant-oriented measure of cost in the presence of user diversity, together with the identification of the natural benchmark: the same cost measure for an appropriately defined average of the diversity.\\
2) A full characterization of those network topologies for which diversity always helps, for all latency functions and demands. For single-commodity routings, these are series-parallel graphs, while for multi-commodity routings, they are the newly-defined ``block-matching'' networks. The latter comprise a suitable interweaving of multiple series-parallel graphs each connecting {a} distinct 
source-sink {pair}.

While the result for the single-commodity case may seem intuitive in light of the well-known Braess paradox, the two problems are different: there are instances where diversity helps although the Braess paradox occurs, and vice-versa. But the main technical challenge is to establish the ``only if'' direction of the result for multi-commodity networks. This follows by constructing an instance where diversity hurts, and showing how to embed it in any network which is not block-matching, by carefully exploiting the way the simple source-sink paths of the commodities intersect in the ``non-block-matching'' portion of the network.

\end{abstract}

\setcounter{footnote}{0}

\section{Introduction}

People are inherently diverse and it is a common belief that diversity helps.  In one of the central  themes of algorithmic game theory---the tension between selfish behavior and social optimality---can diversity of user preferences help to bring us closer to the coveted social optimality? 
We provide an answer to this question in the context of non-atomic selfish routing, where diversity naturally arises in the way users trade-off two criteria, for example, time and money, or, in the presence of uncertain delays, expectation and variance of delay.  

Diversity is reflected in combining the two criteria via different individual coefficients, which we call the `\divpar{s}'.  
We consider a linear combination of the two criteria, as in the literature on tolls  where users minimize travel time plus tolls, e.g.,~\citep{BeMcWi56,FlJaMa04} or the literature on risk-averse selfish routing where users minimize expected travel time plus variance (\citep{NSM11}), or more generally travel time plus a deviation function (\citep{DBLP:conf/sagt/KleerS16}).  

We are interested in understanding whether
heterogeneity in user preferences improves collective outcomes or makes them worse.  As we shall see, there is no unique answer. Rather, it depends on the setting.
To address our question we need to specify how to measure the cost
of an outcome, and define our comparison point, namely a benchmark
setting with no user heterogeneity.  As explained above, 
to measure the cost of an outcome, we treat a user's cost as the 
sum of two terms associated with two criteria:
If we let $\ell_P$ denote the cost of one criterion (e.g., the latency) over a path $P$, 
and $v_P$ be the cost of the second criterion, then the overall cost is given by  $\ell_P + r\cdot v_P$, 
where $r$ is our \divpar.
The special case of $r=0$ corresponds to indifference to the second criterion and results in the classic selfish routing model where users simply minimize travel time.

A first approach to measuring the effect of diversity might be to compare the cost of an outcome with $r=0$  (i.e., just 
the total latency) to that with other values of $r$, including possibly mixed values of $r$ across the population being routed.
However, this approach does not pinpoint the gains and losses from user 
heterogeneity as opposed to user homogeneity;
rather, it (mostly) pinpoints the gains and losses depending
on whether players are affected by the second criterion or
not.
Instead, we focus on the sum of the costs incurred by the users
as measured by their cost functions, and compare costs
incurred by a heterogeneous population of users to those incurred
by an equivalent population of homogeneous users.
What are equivalent populations?
Suppose the heterogeneous population's diversity profile is given by 
a population density function $f(r)$.
Then, we define the corresponding homogeneous population to 
have {the single} diversity parameter $\bar r = \int r f(r) dr$.
In addition, we require the two populations to have the same size, in the 
sense that the total source-to-sink flows that they induce {are} equal.

For this setting, we completely characterize the graphs for which user heterogeneity 
does no harm, in the sense of only reducing
the total cost as perceived by users, for both the case of a single  
commodity (i.e.\ one source-to-sink flow) and for multiple
commodities (i.e.\ {flows between} multiple {source-sink} {pairs}).
For the single-commodity case, these graphs are exactly the series-parallel
graphs, with the source being the ``start'' vertex of the graph, and the
sink being the ``terminal'' vertex.
For the multi-commodity case, each commodity flows over a series-parallel graph, 
and further these subgraphs need to overlap in a particular
manner which we specify later, in the formal statement of results.
For all other graphs, we provide examples of desired source-to-sink demands
for which the resulting equilibrium flows are more expensive than the flows
in the corresponding homogeneous problem.

\smallskip \noindent \textbf{Related work.} 
To the best of our knowledge this is the first work that methodically  compares the effects of heterogeneity and homogeneity in user preferences for a self-interested 
routing population. In fact, in the broader area of algorithmic game theory, this seems to be the first time that a question of this type has been considered, with the exception of \cite{chenetalaltruism}. \footnote{In a different setting, \cite{chenetalaltruism} show how diversity may affect a bound they prove for the price of anarchy on parallel link networks.}  Algorithmic game theory research mentioning diversity exists in the context of the theory of evolution (e.g., \cite{DBLP:conf/innovations/MehtaPP15,DBLP:conf/innovations/ChastainLPV13}), which is very different from our focus.   

Since we provide attitudes to time versus money and time versus risk as motivating examples for user diversity, we briefly mention related work on tolls and on risk-averse selfish routing.  Regarding the latter, 
there are multiple ways to model how the behavior of players incorporates risk when uncertainty is present (see e.g.~\cite{Rock07}). \citet{PNS13} studied the effect that different risk attitudes may have on a system's performance at equilibrium. They did so by examining the price of anarchy (i.e.\ the ratio of the cost at equilibrium to the optimal cost) for different risk formulations.
\citet{NSM15} and \citet{DBLP:journals/corr/LianeasNM15}
studied the degradation of a network's performance due to risk aversion. This kind of degradation is captured by the price of risk aversion, which compares the cost of the equilibrium when players are risk-{averse} to the equilibrium cost when players are risk-{neutral}. The above works assumed that all players have the same risk averse preferences, which we call risk homogeneity; they do not offer any indication as to whether and under what circumstances risk heterogeneity improves or harms a system's performance.  In contrast, \citet{DBLP:conf/wine/FotakisKL15} considered games with heterogeneous risk-averse players.  They showed how uncertainty may and can be used to improve a network's performance, but the effects of heterogeneity as opposed to homogeneity were left unexamined.
Regarding the related literature on tolls, early results (e.g., \cite{BeMcWi56}) 
showed that tolls can help 
implement the social optimum  as an equilibrium, when users 
all have the same linear objective function combining time and money.
 Much more recently, 
these results were extended to the case where users trade-off travel time and money differently, by \citet{DBLP:conf/stoc/ColeDR03} and \citet{Fleischer2005217} for the single commodity case, and by \citet{KaKo04} and \citet{FlJaMa04} for the multicommodity case. 
We remark that in the above works,  apart from \citet{DBLP:conf/stoc/ColeDR03}, the social welfare is defined as the total travel time, whereas in our work we consider the total user cost, which encapsulates both criteria. This, for example, is also the case for \citet{DBLP:journals/algorithmica/ChristodoulouMP14} and \citet{DBLP:conf/caan/KarakostasK04}. We further note that if 
the social welfare is defined as the total travel time then there are simple instances on parallel link graphs where diversity is harmful.

Characterizing the topology of networks that satisfy some property 
is a common theme in computer science. 
Relevant to our work, \citet{Epstein2009115}  characterized the topology of single-commodity networks for which all Nash equilibria are social optima (under bottleneck costs), and  \citet{DBLP:journals/geb/Milchtaich06}  characterized the topology of single-commodity networks which do not suffer from the Braess Paradox for any cost functions. 
 \citet{DBLP:conf/sagt/ChenDH15}  fully characterized the topology of multi-commodity networks that do not suffer from the Braess paradox. These characterizations appear similar to ours, although there does not seem to be any other connection between the two problems, as {(i)} there are instances where diversity helps {while} the Braess Paradox occurs and others where diversity hurts but the paradox does not occur, and {(ii)} the Braess Paradox may occur in series-parallel networks when considering selfish routing with heterogeneity in 
 user preferences, which is not the case for the classic selfish routing model. 

In the following works,  the characterizing
topology for the corresponding question (for a single commodity) is similar to ours. \citet{DBLP:journals/im/FotakisS08}  considered atomic games and proved that series-parallel networks are the largest class of networks for which strongly optimal tolls are known to exist. \citet{NSM15}  considered
homogeneous agents and a social cost function that does not account for the second criterion; They showed that series-parallel networks admit the best bound on the degradation of the network due to risk aversion. Theorem 4 of \citet{DBLP:journals/corr/AcemogluMMO16},  proves that series-parallel
networks are the characterizing topology for what they call the Informational Braess Paradox with Restricted Information Sets;
this theorem compares the cost of  one agent type before and after more information is revealed to agents of that type, but does not consider the change in the cost of other agent types.
In contrast, our work considers non-atomic games with heterogeneous agents and bounds the overall costs
faced by the collection of agents. Most relevant to our work is \cite{DBLP:journals/corr/MeirP14}  and its Theorem
3.1 as it implies that for series-parallel networks the cost of
an agent of average parameter only increases when switching
from the heterogeneous instance to the corresponding homogeneous
one and thus for our sufficiency theorems, one is left
to prove that the heterogeneous equilibrium cost is no greater
than the cost of an agent of average parameter (though we give a different proof).

\smallskip
\noindent \textbf{Contribution.}
We fully characterize the topology of networks for which diversity is never harmful, regardless of the demand size and the distribution of the \divpar{ }(discrete or continuous). 
We do so both for single and multi-commodity networks. 
%

For single-commodity networks it turns out that this topology is that of series-parallel networks.  In Theorem~\ref{thm:SePaSufficient}, we show that if the network is series-parallel, then diversity
only helps for any choice of demand and edge functions. 
The key observation is that 
there is a path for which the homogeneous flow is at least as large as the heterogeneous flow.
As the cost of the homogeneous flow is the same on all used paths, while the cost of each unit of heterogeneous flow is lowest on the path it uses, one can then deduce that the cost of the heterogeneous flow
is at most that of the homogeneous flow. To show necessity,
we first provide an instance on the Braess graph for which diversity is harmful,
and then show how to embed it in any non-series-parallel graph.

In multi-commodity networks, by the result 
above,  each commodity must route its flow through a series-parallel subnetwork. But, as Proposition \ref{prop:JustTwoPaths} shows, this is not enough, and the way in which these series-parallel networks overlap needs to be constrained. The necessary constraint is exactly captured by the class of \emph{block-matching} networks, 
defined in this paper. 
Sufficiency in this case then follows quite easily from the same result for the single commodity case.

The main technical challenge is to show necessity.
To this end, assuming 
diversity does no harm, we show, via a case analysis, how the subnetworks of the commodities may overlap.
First, in Proposition \ref{prop:JustTwoPaths}, we give an instance on a network of two commodities and three paths for which diversity hurts. 
Then we mimic this instance on a general network.
The  difficult part is to choose the
corresponding paths for the mimicking, so that, in the created instance, all the flow under both equilibria goes through these paths. 
The challenge 
is that the commodities' subnetworks may overlap in subtle ways.

\section{Preliminaries}
We consider a directed multi-commodity network $G=(V,E)$ with an aggregate demand of $d_k$
units of flow between origin-destination pairs $(s_k,t_k)$ for $k\in K$. We
let $\paths_k$ be the set of all paths between $s_k$ and $t_k$, and
$\paths:=\cup_{k\in K} \paths_k$ be the set of all origin-destination paths.  
We let $[m]$ denote $\{1,\ldots, m\}$.  We assume that
 $K=[m]$, for some $m$.  The users in the
network---i.e., the players of the game---must choose routes that connect
their origins to their destinations. We encode the collective decisions of
users in a flow vector $f =(f_{\route})_{\route\in\paths}\in
\R^{|\paths|}_+$ over all paths. Such a flow is feasible when demands are
satisfied, as given by constraints $\sum_{\route\in\paths_k} f_{\route}=d_k$
for all $k\in K$. For simplicity, we let $f_e$ denote the flow on edge $e$;
note that $f_e = \sum_{\route :e\in \route}
f_{\route}$. When we need multiple flow variables, we use the analogous
notation $g, g_{\route}, g_e$.

The network is subject to congestion that affects 
two criteria  the players consider.  
These two criteria are modeled by two \emph{edge-dependent functions} that take as input the edge flow $f_e$ of $e$, for each edge $e$: a latency function $\ell_e(x)$ assumed to be continuous and non-decreasing,  and a \stdev{ }function $\std_e(x)$
  assumed to be continuous (but not necessarily non-decreasing).
  Function $\ell_e(\cdot)$ represents the first criterion while  $\std_e(\cdot)$ represents the second criterion.

Players choose paths according
to a linear combination of 
the
first criterion  and the 
second criterion   along the route. Throughout the paper we refer to the players'
objective as the cost along a route. 
Formally, for a given user, on
letting $\ell_{\route}(f) =\sum_{e\in\route} {\ell_e (f_e)}$ and $\std_{\route}(f) =\sum_{e\in\route} {\std_e(f_e)}$, for a constant $r\ge 0$  that quantifies the user \divpar, the user's cost along route $\route$ under flow $f$ is
\begin{equation}\label{eqn:pathcost}
\pathcost^r_{\route}(f) =\sum_{e\in\route} {\ell_e (f_e)} +r \sum_{e\in\route} {\std_e(f_e)}=\ell_{\route}(f)+r \std_{\route}(f)
\end{equation} 
{We} assume that for any edge and for any 
player's \divpar{ }$r$, the functions  $\ell_e$ and $\ell_e+r\std_e$ are non-decreasing.  
We note that if there is an upper bound $r_{\max}$ on the possible values of the \divpar{ }$r$, then the latter assumptions do not require  $\std_e$ to be non-decreasing.  This is desirable because,  for example, in  risk-averse selfish routing where  $\std_e$ models the variance, $\std_e$ can be a decreasing function of the flow.

\smallskip


\noindent{\bf Players Heterogeneity.} 
{We} assume that the \divpar{} distribution {$r_k$} for each commodity $k$ is heterogeneous, i.e. 
there may be more than one value of the  \divpar{s} $r$ for the players routing commodity $k$.
We use the term \emph{\ntrl} to refer to players with $r=0$.

We consider two cases: where the distribution of the \divpar{ }among the players is discrete and where it is continuous.\footnote{In fact, for our results,  we could only focus on the discrete case, though we would first have to prove that any continuous case instance has a corresponding discrete case instance such that their homogeneous equilibria have 
the same costs, as do their heterogeneous equilibria.} For a discrete distribution of, say, $n$ discrete values $r^k_1,\ldots ,r^k_n$, the demand $d_k$ is a vector $d_k=(d^k_1,\ldots d_n^k)$ where each $d^k_i$ denotes the total demand of Commodity $k$ with \divpar{ }$r^k_i$.
We let $d^k$ denote Commodity $k$'s total demand, $d^k=\sum_{i=1}^nd^k_i$.
For a continuous distribution with infimum and supremum $r^k_{\min}$ and $r^k_{\max}$ respectively, the demand $d_k$ is given by a density function 
$\rho_k:[r_{\min},r_{\max}]\rightarrow \mathbb{R}_{\geq 0}$ such that for any two values $r_1\leq r_2$ the total demand with \divpar{ }$r_1\leq r\leq r_2$ has magnitude $\int_{r_{1}}^{r_{2}}\rho_k(r)dr$, and ${d^k}=\int_{r^k_{\min}}^{r^k_{\max}}\rho_k(r)dr$. 
Variables $f^r_{\route}$ and $f_e^r$ denote the flow of \divpar{ }$r$ on path $\route$ and edge $e$, respectively.

Formally, an instance is described by the tuple $(G,\{(\ell_e,\std_e)\}_{e\in E},\{(s_k,t_k)\}_{k\in K},\{d_k\}_{k\in K},\{r_k\}_{k\in K})$ for the discrete case, where $r_k=(r^k_1,\ldots ,r^k_n)$  is the vector of different \divpar{s} encountered in the heterogeneous population, and by the tuple 
$(G,\{(\ell_e,\std_e)\}_{e\in E},\{(s_k,t_k)\}_{k\in K},\{\rho_k\}_{k\in K})$ in the continuous case.

\smallskip
\noindent{\bf Equilibrium flows.}
The  Wardrop equilibrium of an instance is a
flow $f$ such that for every $k\in K$, for every path
$\route\in\paths_k$ with positive flow, and any \divpar{ }$r$ on it, {the  path cost} 
$\pathcost_{\route}^r(f) \leq \pathcost_{\route'}^r(f)$ for all {paths} $\route' \in \paths_k\,$. 

From here on, we shall refer to the  Wardrop equilibrium as the equilibrium.
Our goal is to compare the total user cost at the equilibrium of an instance with  a population that has heterogeneous \divpar{s}, 
to
the total user cost at the {equilibrium} of the same instance but with the population of each commodity keeping its magnitude changed to be homogeneous, with \divpar{ }equal to the expected value of the \divpar{ }distribution in the heterogeneous population of the commodity. {To differentiate more easily, for} a heterogeneous instance we call the former the {\em heterogeneous equilibrium} and the latter the (corresponding) {\em homogeneous equilibrium}. 
We usually denote the heterogeneous equilibrium by $g$ and the homogeneous equilibrium by $f$. 
The existence of both equilibria is guaranteed   by e.g.  \citet[Theorem 2]{Schmeidler1973}.
We note here that in general we {do not} need uniqueness of equilibria, neither for the edge costs nor for the edge flows. 
Our results hold for any arbitrary pair of heterogeneous and homogeneous equilibria of the corresponding instances.  
Also, as for classic routing games, {without loss of generality (WLOG)} we may assume that equilibrium flows are acyclic.

\smallskip
\noindent{\bf Total Costs.}
For a heterogeneous equilibrium flow vector $g$, in the discrete case, the \emph{heterogeneous total cost} of Commodity $k$ is denoted by $C^{k,ht}(g)
= \sum_{j=1\ldots n}d^k_jc^{k,r_j^k}(g)$ where $c^{k,r_j^k}(g)$ denotes the common cost at equilibrium $g$ for players of \divpar{ }$r_j^k$ in Commodity $k$. 
In the continuous case, the \emph{heterogeneous total cost} of Commodity $k$ is denoted by $C^{k,ht}(g)
= \int_{r^k_{\min}}^{r^k_{\max}} \rho_k(r)c^{k,r}(g)dr$, where $c^{k,r}(g)$ denotes the common cost at equilibrium $g$ for players of \divpar{ }$r$ in Commodity $k$. The \emph{heterogeneous total cost} of  $g$ is then $C^{ht}(g)=\sum_{k\in K} C^{k,ht}(g)$.
For the corresponding homogeneous equilibrium flow $f$, i.e.\ the instance with \divpar{ }$\bar{r}^k$, where $\bar{r}^k$ denotes the average \divpar{ }for Commodity $k$,
players of  Commodity $k$ share the same cost $c^{\bar{r}^k}(f)$.
Then, the \emph{homogeneous total cost} of Commodity $k$ under $f$ is
$C^{k,hm}(f)=d_kc^{\bar{r}^k}(f)$,
and  the \emph{homogeneous total cost} of $f$ is
$C^{hm}(f)=\sum_{k\in K} C^{k,hm}(f)$.
Finally, if  
$C^{ht}(g)\leq C^{hm}(f)$, we say that \emph{diversity helps};
if not, we say that \emph{diversity hurts}. For our characterization to be
meaningful, we assume an \emph{average-respecting} demand, i.e.,
a demand where $\forall i, j : \bar{r}^i=\bar{r}_j$. Otherwise, diversity may hurt in simple instances, e.g., with two parallel links and two
commodities (see Appendix \ref{app:nonAvRes} for an example).

\smallskip
\noindent{\bf  Networks.}
For a network $G$ we let $V(G)$ and $E(G)$ denote its vertex set and edge set, respectively.
 
A directed $\stot$ network $G$ is \emph{series-parallel} if it consists
of a single edge $(s, t)$, or it is formed by the series or parallel composition of two series-parallel
networks with terminals $(s_1, t_1)$ and $(s_2, t_2)$, respectively.
In a \emph{series composition}, $t_1$ is identified
with $s_2$, $s_1$ becomes $s$, and $t_2$ becomes $t$.  In a \emph{parallel
composition}, $s_1$ is identified with $s_2$ and becomes $s$, and $t_1$ is
identified with $t_2$ and becomes $t$.
The internal vertices of a series-parallel network $G$ are all its vertices other than its terminals.

An $\stot$ series-parallel network may be represented using  a sequence of networks $B_j$ connected in series,  where each  $B_j$ is either a single edge or two  series-parallel networks connected in parallel\footnote{Note that this definition captures the simple case of many edges connected in parallel. }. 
Given a series-parallel network $H$, we can write $H=sB_1v_1B_2v_2\ldots B_{b-1}v_{b-1}B_{b}t$, where for any $j$ and triple $xB_jy$, $x$ and $y$ are the terminals of the series-parallel network $B_j$, 
and $B_j$ is either a single edge or a 
parallel combination of two series-parallel networks.  
We refer to the $B_j$'s as \emph{blocks}, the prescribed representation as the \emph{block representation} of $H$,
and the $v_i$'s as \emph{separators}, as they separate $s$ from $t$. Two series-parallel networks  $G_1$ and $G_2$ 
 are said to be {\em block-matching} if  for every block $B$ of $G_1$ and every block $D$ of  $G_2$, either $E(B)=E(D)$ or $E(B)\cap E(D)=\emptyset$. Note that $E(B)= E(D)$ implies that $B$ and $D$ have the same terminals and direction, as for either $B$ or $D$, the source vertex will have only outgoing edges toward the internal vertices and the target vertex will have only incoming edges from the internal vertices. 


For a  $k$-commodity network $G$, let $G_i$  be the subnetwork of $G$ that contains all the vertices and edges of $G$ that belong to a simple $s_i\dsh t_i$ path for Commodity $i$. In other words, $G_i$ is the {subnetwork} of $G$ for Commodity $i$ that equilibria flows will consider, as they are, WLOG, acyclic.  
 A multi-commodity network $G$ is \emph{block-matching} if for every $i$, $G_i$ is series-parallel, and for every $i,j$,  $G_i$ and $G_j$ are block-matching. 
 An example is given in Figure \ref{fig:blockMatched}.

\begin{figure}\center
\includegraphics[scale=0.45]{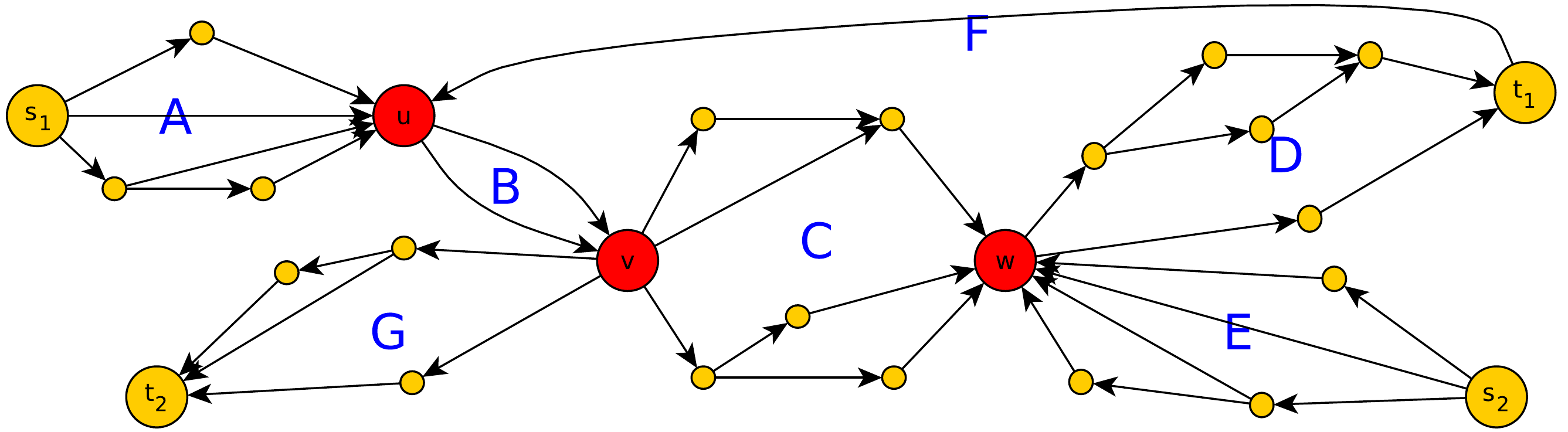} 
\caption {A block-matching network of $2$ commodities. $G_1$ and $G_2$ are series-parallel and their block representations are  $G_1=s_1AuBvCwDt_1$ and $G_2=s_2EwDt_1FuBvGt_2$. $G_1$ and $G_2$ share exactly blocks $B$ and $D$ and do not share any edge on any other of their blocks. If we add an edge from $s_1$ to $t_1$, then the network stops being block-matching since $G_1$ will be a block by itself
and it will not match any of the blocks of $G_2$.}
\label{fig:blockMatched}
\end{figure}



\section{Topology of Single-Commodity Networks for
 which Diversity Helps}\label{sec:TopSingle}

In this section, we fully characterize the topology of single-commodity networks for which, with any choice of {heterogeneous} demand and edge functions, diversity helps. 
WLOG we may restrict our attention to single-commodity networks whose edges all belong 
to some simple source-destination path as only these edges are going to be used by the (WLOG, acyclic) equilibria and thus all other edges can be discarded. It turns out that this topology is exactly that of series-parallel networks (Theorems \ref{thm:SePaSufficient} and \ref{thm:SePaNecessary}).  
{Ommited proofs can be found in Appendix~\ref{app:ommitedProofs}.}

\subsection{Series Parallel Implies Diversity is Helpful} 
Throughout this section we will be  considering
a heterogeneous instance $\G$ on an $\stot$ series-parallel network $G$. 
We let $\F$ denote the corresponding homogeneous instance.
We let $g$ denote an equilibrium flow for $\G$ and $f$ an equilibrium flow for $\F$.
Finally, we let $C^{ht}(g)$ denote the cost of flow $g$ and $C^{hm}(f)$
the cost of flow $f$.   Although redundant, we keep  the superscripts  as a further
reminder of the flow type at hand.

The key observation is that there is a path $P$ used by flow $f$ such that for every edge
in $P$, $f_e \ge g_e$, and hence 
for any $r \in [0, r_{\max}]$, 
$ c_p^r(f) \ge c_p^r(g)$ (Lemmas~\ref{lm:greaterflow}\footnote{Lemma \ref{lm:greaterflow} is similar to \cite[Lemma 2]{DBLP:journals/geb/Milchtaich06}, though for completeness we include its proof here.} and~\ref{lm:SePaPath}).
We then deduce our result: $C^{ht}(g) \le C^{hm}(f)$ 
(
Theorem~\ref{thm:SePaSufficient}).

\hide{
To prove the sufficiency of the network being series-parallel, we first prove a property of the square root function (Lemma \ref{lm:SqrtProp}). We use it in Lemma \ref{lm:pathHmVsHt}, to give a condition for general networks  under which diversity helps.  Lemma \ref{lm:greaterflow}, which in a sense compares flows on series-parallel networks, is then used in Theorem \ref{thm:SePaSufficient} to give that in series-parallel networks for any two flows of the same magnitude there is a path where one flow is, on all edges, not smaller than the other one, which then is used  to prove that in series-parallel networks the condition of Lemma  \ref{lm:pathHmVsHt} holds and thus diversity helps. 
}

\hide{
\begin{lemma}\label{lm:SqrtProp}
For any 
$x,y,M$, with $x\geq y\geq 0$ and $M\geq x^2$,  {we have} 
$x-y+\sqrt{M+y^2-x^2}\geq \sqrt{M}$. 
\end{lemma}
}

\hide{
\begin{proof}
\begin{gather}
\nonumber
x-y+\sqrt{M+y^2-x^2}\geq \sqrt{M} \\
\nonumber
\Longleftrightarrow \\
\nonumber
\cancel{x^2}-2xy+y^2+2(x-y)\sqrt{M+y^2-x^2}+\cancel{M}+y^2-\cancel{x^2}\geq \cancel{M} \\
\nonumber
\Longleftrightarrow\\
\nonumber
-2y(x-y)+2(x-y)\sqrt{M+y^2-x^2}\geq 0 \\ 
\nonumber
\Longleftrightarrow\\
\nonumber
\sqrt{M+y^2-x^2}\geq y
\\
\nonumber
\Longleftrightarrow
\\ 
\label{eqn:final-sqrt-relat}
 M-x^2\geq 0.
\end{gather}
\eqref{eqn:final-sqrt-relat} holds by hypothesis. 
%
\end{proof}
}



\hide{
\begin{lemma}\label{lm:SqrtProp}
For any $a$,$b$, and any $c\geq 0$,
$b+\sqrt{(a-b)^2+c}\geq \sqrt{a^2+c}$. 
\end{lemma}

\begin{proof}
\begin{gather}
\nonumber
b+\sqrt{(a-b)^2+c}\geq \sqrt{a^2+c} \\
\nonumber
\Longleftrightarrow \\
\nonumber
b^2+2b\sqrt{(a-b)^2+c}+(a-b)^2+c\geq a^2+c \\
\nonumber
\Longleftrightarrow\\
\label{eqn:final-sqrt-rel}
b+\sqrt{(a-b)^2+c}\geq a. 
\end{gather}  
\eqref{eqn:final-sqrt-rel} holds because 
if $b\geq a$ then $b+\sqrt{(a-b)^2+c}\geq a$,
while
if $b<a$ then 
$$b+\sqrt{(a-b)^2+c}\geq a\Longleftrightarrow (a-b)^2+c\geq (a-b)^2\Longleftrightarrow c\geq 0.$$
%
%
\end{proof}
}

\begin{lemma}\label{lm:greaterflow}
Let G be an $\stot$ series-parallel network and let $x$ and $y$ be flows on $G$ that route $d_1$ and $d_2$ 
units of traffic respectively, with $d_1\geq d_2$ and $d_1>0$.
Then, there exists an $\stot$ path $P$ such that for all $e \in P, x_e>0$ and $x_e\geq y_e$. 
\end{lemma}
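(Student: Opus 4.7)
The plan is to prove this by structural induction on the series-parallel network $G$, using the recursive definition (single edge, or series/parallel composition of two SP subnetworks).

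For the base case, $G$ is a single edge $e = (s,t)$, so the only $\stot$ path is $P = \{e\}$. Then $x_e = d_1 > 0$ and $x_e = d_1 \geq d_2 = y_e$, and we are done.

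For the series case, $G$ is the series composition of SP networks $G_1$ (with terminals $s, u$) and $G_2$ (with terminals $u, t$). Every unit of flow in $x$ and $y$ must traverse $u$, so the restrictions $x|_{G_i}$ and $y|_{G_i}$ are feasible flows of magnitudes $d_1$ and $d_2$ on $G_i$ for $i=1,2$. Applying the inductive hypothesis to each $G_i$ yields paths $P_1 \subseteq G_1$ and $P_2 \subseteq G_2$ with the desired properties; concatenating them gives the required $\stot$ path in $G$.

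For the parallel case, $G$ is the parallel composition of $G_1$ and $G_2$, both sharing terminals $s$ and $t$. Let $x_i$ and $y_i$ be the total flow that $x$ and $y$, respectively, send through $G_i$, so $x_1 + x_2 = d_1$ and $y_1 + y_2 = d_2$. The key claim is that there exists $i \in \{1,2\}$ with $x_i > 0$ and $x_i \geq y_i$, to which we can then apply the inductive hypothesis on $G_i$ (restricting $x$ and $y$ to $G_i$ as flows of magnitudes $x_i$ and $y_i$). To establish the claim, do a case analysis: (i) if $x_1 > 0$ and $x_1 \geq y_1$, take $i = 1$; (ii) if $x_1 = 0$, then $x_2 = d_1 > 0$ and $x_2 = d_1 \geq d_2 \geq y_2$, so take $i = 2$; (iii) if $x_1 > 0$ but $x_1 < y_1$, then $x_2 - y_2 \geq y_1 - x_1 > 0$ (using $x_1 + x_2 \geq y_1 + y_2$), so $x_2 > y_2 \geq 0$ and we take $i = 2$.

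The argument is essentially routine structural induction on the SP decomposition; no step is a serious obstacle. The only delicate point is the exhaustive case analysis in the parallel composition step ensuring that some side carries strictly positive $x$-flow while also dominating the corresponding $y$-flow, which requires the combined use of both hypotheses $d_1 > 0$ and $d_1 \geq d_2$.
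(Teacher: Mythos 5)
Your proof is correct and follows essentially the same route as the paper's: structural induction on the series-parallel decomposition, restricting the flows to each component in the series case and selecting a suitable side in the parallel case. The only difference is that you spell out, via an explicit three-way case analysis, the claim that the paper dispatches with a ``WLOG'' (namely that some parallel component receives strictly positive $x$-flow and at least as much $x$-flow as $y$-flow), which is a welcome clarification but not a different argument.
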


\hide{
\begin{proof}
\hide{
We will consider only the edges of $G$ that are used under $x$ or $y$ and we will prove that in that subnetwork there exists an $\stot$  path $p$  such that $\forall e \in p: x_e\geq y_e$. All edges being used by $x$ or $y$ implies that $\forall e \in p: x_e>0$ and surely any $\stot$ path of a subnetwork of $G$ is an \stot path of $G$. 

That being said, WLOG we assume that all edges of $G$ are used by $x$ or $y$. 
}
The proof is by induction on the decomposition of the series-parallel network. 
The base case of $G$ being a single edge $e$ is trivial as $x_e=d_1\geq d_2=y_e$.

For the inductive step, first suppose that $G$ is a series combination of two series-parallel networks $G_1$ and $G_2$. 
For $i=1,2$, let  $x^i$ be the restriction of flow $x$ to $G^i$,
and $y^i$ the restriction of flow $y$. 
By the inductive hypothesis, 
there is an $\stot$ path $P_i$ in $G_i$ such that for all $e \in P_i, x^i_e\geq y^i_e$. 
It suffices to set $P$ to be the concatenation of $P_1$ with $P_2$.
 
Now assume that $G$ is a parallel combination of two series-parallel networks $G_1$ and $G_2$. 
Again, for $i=1,2$, let $x^i$ be the restriction of flow $x$ to $G^i$, and $y^i$ of flow $y$.
We may assume WLOG that the flow $d_1^1$ that $G_1$ receives in $x^1$ is at least as large
as the flow $d_2^1$ that it receives in $y^1$, and further that $d_1^1 >0$.
By the inductive hypothesis applied to $G_1$ with demands $d_1^1$ and $d_2^1$,
we obtain that there exists an $\stot$ path $P$ such that for all
$ e \in P, x^1_e\geq y^1_e$
and this implies that for all $ e \in P, x_e\geq y_e$, as needed.  
\end{proof}
}

\begin{lemma}\label{lm:SePaPath}
There exists
a path $P$ used by $f$ such that for any $r\in [0,r_{\max}]$, $c^r_p(g)\leq c^r_p(f)$.
\end{lemma}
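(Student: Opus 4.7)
The plan is to prove this by a direct application of Lemma~\ref{lm:greaterflow} combined with the standing monotonicity hypothesis on the edge cost functions. The only thing that needs to be verified before invoking the lemma is that the two flows $f$ and $g$ route the same total amount of flow. This is precisely the definition of the corresponding homogeneous instance: in moving from the heterogeneous population to its homogeneous counterpart we keep the total source–sink demand $d$ unchanged (only the \divpar{} is replaced by the average $\bar r$). Hence both $f$ and $g$ are $s$-$t$ flows of value $d$ on the series-parallel network $G$.

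Applying Lemma~\ref{lm:greaterflow} with $x := f$, $y := g$, and $d_1 = d_2 = d > 0$ (the degenerate $d=0$ case is trivial as there is no flow at all and nothing to prove) yields an $s$-$t$ path $P$ such that for every edge $e \in P$ we have $f_e > 0$ and $f_e \ge g_e$. The condition $f_e > 0$ on every edge of $P$ is exactly what it means for $P$ to be used by $f$, which gives the first part of the conclusion.

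For the cost comparison, fix any $r \in [0, r_{\max}]$. By the standing assumption in the Preliminaries, the function $\ell_e + r\std_e$ is non-decreasing for every edge $e$ and every admissible $r$. Since $f_e \ge g_e$ on every $e \in P$, monotonicity gives
\begin{equation*}
\ell_e(f_e) + r\,\std_e(f_e) \;\ge\; \ell_e(g_e) + r\,\std_e(g_e) \qquad \text{for all } e \in P.
\end{equation*}
Summing this inequality over all edges of $P$ and recalling the definition~\eqref{eqn:pathcost} of $c^r_P(\cdot)$ yields $c^r_P(f) \ge c^r_P(g)$, which is the desired inequality.

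There is essentially no obstacle here: all of the technical work has been absorbed into Lemma~\ref{lm:greaterflow} (the series-parallel structural lemma) and into the assumption that $\ell_e + r\std_e$ is non-decreasing even when $\std_e$ itself need not be. The one subtlety worth flagging is that the monotonicity assumption must be invoked uniformly for every $r \in [0, r_{\max}]$, which is exactly how it is stated; this is what lets a single path $P$ (chosen without reference to $r$) work simultaneously for all \divpar{s} present in the heterogeneous population — a fact that will be crucial when the lemma is used in Theorem~\ref{thm:SePaSufficient} to compare total costs across all user types at once.
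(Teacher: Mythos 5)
Your proof is correct and follows essentially the same route as the paper's: both invoke Lemma~\ref{lm:greaterflow} with $x=f$, $y=g$ (using that the two flows have equal magnitude) to obtain a path $P$ with $f_e>0$ and $f_e\ge g_e$ on every edge, and then sum the edgewise inequality $\ell_e(f_e)+r\std_e(f_e)\ge \ell_e(g_e)+r\std_e(g_e)$, which holds by the standing assumption that $\ell_e+r\std_e$ is non-decreasing. Your explicit remarks on the $d=0$ degenerate case and on the uniformity of the monotonicity assumption over all $r\in[0,r_{\max}]$ are fine additions but do not change the argument.
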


\begin{proof}
Flows $f$ and $g$ have the same magnitude on the series-parallel network $G$. 
Applying Lemma \ref{lm:greaterflow} with $x=f$ and $y=g$ implies that there exists an $\stot$ path $P$ 
such that for all $ e \in P,$ $f_e>0$, implying that WLOG $P$ is used by $f$, and $f_e\geq g_e$. 
By assumption, for any $r\in [0,r_{\max}]$, 
$\ell_e+r\sigma_e$ is non-decreasing, 
and thus for all $e \in P,
\ell_e(f_e)+r\sigma_e(f_e)\geq \ell_e(g_e)+r\sigma_e(g_e)$. 
Consequently, $\sum_{e\in P}\big(\ell_e(f_e)+r\sigma_e(f_e)\big)\geq \sum_{e\in P}\big(\ell_e(g_e)+r\sigma_e(g_e)\big) \Leftrightarrow c^r_p(g)\leq c^r_p(f)$ as needed.

\hide{
for all $e \in P$,
it follows that $\ell_p(g)\leq \ell_p(f)$.
It remains to show that for any $r\in [0,r_{\max}]$: $c^r_p(g)\leq c^r_p(f)$.\footnote{
One might be tempted to use the equivalence $\sum_{e\in P}\ell_e(f_e)+r\sqrt{\sum_{e\in P}\sigma^2_e(f_e)}\geq \sum_{e\in P}\ell_e(g_e)+r\sqrt{\sum_{e\in P}\sigma^2_e(g_e)} 
\Leftrightarrow \Big(\sum_{e\in P}\ell_e(f_e)- \sum_{e\in P}\ell_e(g_e)\Big)^2 \geq r^2\Big(\sqrt{\sum_{e\in P}\sigma^2_e(g_e)}-\sqrt{\sum_{e\in P}\sigma^2_e(f_e)}\Big)^2$ and prove the (easier) second inequality. In fact, this equivalence does not hold;  
e.g., $2+4\geq 1+1 \not\Rightarrow (2-1)^2\geq (1-4)^2$ and $(2-1)^2\leq (1-4)^2 \not\Rightarrow 2+4\leq 1+1$.}


To this end, let $A=\{e\in P: \sigma_e(f_e)\geq \sigma_e(g_e)\}$ and $B=\{e\in P:\sigma_e(g_e) > \sigma_e(f_e)\}$, and note that $A\cup B$ contains exactly the edges of $P$. For $c^r_p(f)$ we have
\[c^r_p(f) =\sum_{e\in P}\ell_e(f_e)+r\sqrt{\sum_{e\in P}\sigma^2_e(f_e)} = \sum_{e\in A}\ell_e(f_e)+\sum_{e\in B}\ell_e(f_e)+r\sqrt{\sum_{e\in A}\sigma^2_e(f_e)+\sum_{e\in B}\sigma^2_e(f_e)}.\]
By using the definition of set $A$ and {the fact} that the $\ell_e$'s are increasing,  we get
\[c^r_p(f) \geq \sum_{e\in A}\ell_e(g_e)+\sum_{e\in B}\ell_e(f_e)+r\sqrt{\sum_{e\in A}\sigma^2_e(g_e)+\sum_{e\in B}\sigma^2_e(f_e)}.\]
Applying  $\ell_e(f_e)+r\sigma_e(f_e)\geq \ell_e(g_e)+r\sigma_e(g_e)$ and putting the $r$ inside the square root, we further deduce
\[c^r_p(f) \geq \sum_{e\in A}\ell_e(g_e)+\sum_{e\in B}\big(\ell_e(g_e)+r\sigma_e(g_e)-r\sigma_e(f_e)\big)+\sqrt{\sum_{e\in A}r^2\sigma^2_e(g_e)+\sum_{e\in B}r^2\sigma^2_e(f_e)},\]
which by adding and subtracting $\sum_{e\in B}\sigma^2_e(g_e)$ inside the square root  yields
\[c^r_p(f) \geq \sum_{e\in P}\ell_e(g_e)+\sum_{e\in B}\big(r\sigma_e(g_e)-r\sigma_e(f_e)\big)+\sqrt{\sum_{e\in P}r^2\sigma^2_e(g_e)+\sum_{e\in B}\big(r^2\sigma^2_e(f_e)-r^2\sigma^2_e(g_e)\big).}\]
On applying Lemma \ref{lm:SqrtProp} $|B|$ times, once for each $e\in B$, with $x=r\sigma_e(g_e)$, $y=r\sigma_e(f_e)$ and $M$ {equal to} the  remainder under the square root after  subtracting  $y^2 -x^2 = r^2\sigma^2_e(f_e)-r^2\sigma^2_e(g_e)$,  we finally obtain  
\[c^r_p(f) \geq \sum_{e\in P}\ell_e(g_e)+\sqrt{\sum_{e\in P}r^2\sigma^2_e(g_e)}=c^r_p(g).\]

Note that at each step, a new edge $e'$ is considered and a new $M=M_{e'}$ is defined,
which needs to satisfy $M \ge x^2$, i.e.\ $M_{e'}\geq r^2\sigma^2_{e'}(g_{e'})$. 
To see that this holds, note that the edges $e$ of $A$ and the edges $e$ of $B$ that have been considered in previous steps, each contribute  $r^2\sigma^2_e(g_e)$ to $M$, and the edges $e$ of $B$ that have not yet been considered, each contribute $r^2\sigma^2_e(f_e)$, while $e'$ contributes $r^2\sigma^2_{e'}(g_{e'})$. 
%

Finally, to show that $P$ is used by $f$, as stated in the lemma, we recall that for all $e \in P,$ $ f_e>0$.
} 
\end{proof}

\begin{theorem}\label{thm:SePaSufficient}

 $C^{ht}(g)\leq C^{hm}(f).$\footnote{The inequality might be strict. Consider the case of 2 parallel links with $(\ell_1(x),\sigma_1(x))=(1,x)$ and $(\ell_2(x),\sigma_2(x))=(2,0)$, and $1$ unit of flow, half with $r=0$ and half with $r=2$.}
\end{theorem}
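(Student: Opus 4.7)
The plan is to leverage Lemma~\ref{lm:SePaPath}, combined with the equilibrium properties of $g$ and $f$ and the fact that the cost functional $c_P^r$ is linear in $r$, to chain together the inequality $C^{ht}(g) \le C^{hm}(f)$. Let $d$ denote the common total demand of the two instances. Let $P$ be the path guaranteed by Lemma~\ref{lm:SePaPath}, so that $c_P^r(g) \le c_P^r(f)$ for every $r \in [0, r_{\max}]$, and moreover $P$ is used by $f$.

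First I would bound $C^{ht}(g)$ from above using the equilibrium condition on $g$. For each diversity parameter $r$ with positive mass, players of type $r$ are routed along paths that minimize their own cost $c_\cdot^r(g)$, so their common equilibrium cost $c^r(g)$ satisfies $c^r(g) \le c_P^r(g) = \ell_P(g) + r\,\sigma_P(g)$. Substituting into the definition of $C^{ht}(g)$ and using the average-respecting assumption $\bar r = \frac{1}{d}\sum_j d_j r_j^k = \frac{1}{d}\int r\,\rho_k(r)\,dr$ (in the discrete and continuous cases, respectively) yields
\begin{equation*}
C^{ht}(g) \;\le\; d\,\ell_P(g) \,+\, \Bigl(\sum_j d_j r_j\Bigr)\sigma_P(g) \;=\; d\bigl(\ell_P(g) + \bar r\,\sigma_P(g)\bigr) \;=\; d\cdot c_P^{\bar r}(g),
\end{equation*}
where the key point is that heterogeneity enters the cost only through the linear term $r\,\sigma_P(g)$, which averages out cleanly to $\bar r\,\sigma_P(g)$.

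Next I would invoke Lemma~\ref{lm:SePaPath} at the specific value $r = \bar r$ to obtain $c_P^{\bar r}(g) \le c_P^{\bar r}(f)$. Finally, since $P$ is used by the homogeneous equilibrium $f$ (whose sole diversity parameter is $\bar r$), the equilibrium condition on $f$ gives $c_P^{\bar r}(f) = c^{\bar r}(f)$, the common cost of every used path of $f$, and therefore $d\cdot c_P^{\bar r}(f) = C^{hm}(f)$. Chaining the three inequalities then gives
\begin{equation*}
C^{ht}(g) \;\le\; d\cdot c_P^{\bar r}(g) \;\le\; d\cdot c_P^{\bar r}(f) \;=\; C^{hm}(f),
\end{equation*}
as desired.

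There is no real obstacle beyond Lemma~\ref{lm:SePaPath} itself (which is already established and is the topological heart of the argument). The only subtlety I would be careful about is the average-respecting assumption, which is exactly what allows $\sum_j d_j r_j = d \bar r$ and the analogous continuous identity $\int \rho_k(r)\,r\,dr = d\bar r$; without it, the step that collapses heterogeneous costs on the path $P$ into the homogeneous cost $c_P^{\bar r}(g)$ would fail. The argument works verbatim in the continuous case with sums replaced by integrals against the density $\rho_k$.
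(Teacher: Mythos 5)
Your proposal is correct and follows essentially the same route as the paper's proof: use the equilibrium condition for $g$ to bound each type's cost by its cost on the path $P$ from Lemma~\ref{lm:SePaPath}, average over types so that the linear $r$-dependence collapses to $\bar r$, then compare to $f$ on $P$ at $r=\bar r$ (the paper simply normalizes $d=1$ first). One small note: the identity $\sum_j d_j r_j = d\bar r$ here is just the definition of $\bar r$ as the mean of the single commodity's distribution, not the average-respecting assumption, which only plays a role in the multi-commodity setting.
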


\begin{proof}
Since $G$ is a series-parallel network, on setting $\bar{r}=E[r]$ and then applying Lemma \ref{lm:SePaPath}, 
we obtain that there is a path $P$ used by $f$ such that 
\begin{equation}
\label{eqn:singlePathRelation}
\ell_p (f) + \bar{r} \std_p(f)\geq \ell_p (g) + \bar{r} \std_p(g)
\end{equation} 

WLOG we can assume that the total demand $d=1$.
We {first} bound the total cost of $g$ in terms of the cost of path $P$ under $g$ 
and then we use (\ref{eqn:singlePathRelation}) 
to further bound it in terms of the cost of path $P$ under $f$. 
The latter equals the cost of $f$, as the demand is equal to $1$.

Consider the heterogeneous equilibrium flow $g$. 
By the equilibrium conditions, for any player of \divpar{ }$r$, for any $r$,
the cost she incurs with flow $g$ is
 $c^{r}(g)\leq \sum_{e \in p} {\ell_e (g_e)} +r \sum_{e \in p} {\std_e(g_e)} $.
In other words, there is no incentive to deviate to path $P$ (if not already on it). 
Thus, if the \divpar{s} are discrete, 
given by a demand vector $(d_1,\ldots,d_k)$ of \divpar{s } $(r_1,\ldots,r_k)$,
$$C^{ht}(g)
\leq 
\sum_{i=1\ldots k}d_i\Big(\sum_{e \in p} {\ell_e (g_e)} +r_i \sum_{e \in p} {\std_e(g_e)}\Big)=\ell_p(g)+\bar{r} \std_p(g),$$
with the last equality following as the total demand is 1 
and the average \divpar{ }is $\bar{r}=\sum_{i=1}^kd_ir_i$.
If instead the \divpar{s} are continuously distributed on the demand with density function $\rho(r)$, 
with $r_{\min}$ and $r_{\max}$ being their infimum and supremum respectively, 
$$C^{ht}(g)
\leq 
\int_{r_{\min}}^{r_{\max}} \rho(r)\Big(\sum_{e \in p} {\ell_e (g_e)} +r \sum_{e \in p} {\std_e(g_e)}\Big)dr=\ell_p(g)+\bar{r} \std_p(g),$$
with the last equality following as the total demand is 1, i.e. $\int_{r_{\min}}^{r_{\max}} \rho(r)dr=1$,  
and the average \divpar{ }is $\bar{r}=\int_{r_{\min}}^{r_{\max}} r\rho(r)dr$.
In both {the discrete and continuous case}, 
as $P$ is used by $f$,
we have $C^{hm}(f)= \ell_p (f) + \bar{r} \std_p(f)$, and applying 
(\ref{eqn:singlePathRelation}) we obtain

$$C^{ht}(g)\leq \ell_p(g)+\bar{r} \std_p(g)\leq \ell_p (f) + \bar{r} \std_p(f)=C^{hm}(f).$$ 
%
\end{proof}

\subsection{The Series Parallel Condition is Necessary}

To prove the necessity of the network being series-parallel, we begin by constructing an instance for which diversity hurts, 
i.e.\ the heterogeneous equilibrium has total cost strictly greater than the total cost of the homogeneous equilibrium
(Proposition~\ref{prop:BraMayHurt}).
Then, in Theorem~\ref{thm:SePaNecessary},
we show how to embed this instance into any network that is not series-parallel.

Recall the Braess graph $G_B$, shown in Figure~\ref{fig:BraGoodBad}.
\begin{proposition}\label{prop:BraMayHurt}
For any strictly heterogeneous demand on the Braess graph $G_B$,
there exist edge functions $(\ell_e)_{e\in E}$ and  $(\sigma_e)_{e\in E}$ 
that depend on the demand, for which $C^{ht}(g)>C^{hm}(f)$. In addition, this remains true if we are restricted to only using affine functions.
\end{proposition}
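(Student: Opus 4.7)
The plan is to construct an explicit Braess-graph instance (with parameters depending on the given demand) in which diversity strictly hurts; all the edge functions can be taken affine. I set $\ell_e(x)=x$ on the two ``congested'' edges $(s,u)$ and $(v,t)$, $\ell_e(x)\equiv c$ on the two ``outer'' edges $(s,v)$ and $(u,t)$, and $\ell_{uv}\equiv 0$. All deviation functions are zero except $\sigma_{uv}\equiv S$. The two positive scalars $c$ and $S$ will be tuned to the demand. Write $d$ for the total mass of the demand, $\bar r$ for its mean, and $P_1=(s,u,t)$, $P_2=(s,v,t)$, $P_3=(s,u,v,t)$ for the three $s$-$t$ paths.

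First, I would analyze the homogeneous instance. With every player of type $\bar r$, monotonicity of $\ell+\bar r\sigma$ and the graph's symmetry about the shortcut edge yield that whenever $\bar r S \ge c - d/2$, the equilibrium puts $d/2$ flow on each of $P_1$ and $P_2$ and no flow on $P_3$, so $C^{hm}(f)=d(c+d/2)$.

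Next, I would describe the heterogeneous equilibrium. Since $\sigma_{uv}\equiv S$ is the only positive deviation, the preference for $P_3$ is monotone decreasing in $r$, and the equilibrium has a threshold form: there exists $r^*$ such that players with $r<r^*$ route on $P_3$ and players with $r>r^*$ split symmetrically between $P_1$ and $P_2$. Writing $D^*=\int_0^{r^*}\rho$ for the shortcut mass and $\bar r_l$ for the conditional mean of $r$ on $[0,r^*)$, Wardrop indifference at $r^*$ reads $r^*S=c-(d+D^*)/2$. Expanding $C^{ht}(g)$ as a sum over the two regimes and substituting this identity collapses the cost gap to the compact form
\[
C^{ht}(g)-C^{hm}(f)\;=\;D^*\Big[\tfrac{d}{2}+S\bigl(\bar r_l-r^*\bigr)\Big].
\]

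The last step is to tune $(c,S)$ so that this gap is strictly positive. I would parametrize by the target threshold $r^*$ and set $c:=r^*S+(d+D^*(r^*))/2$; monotonicity of $r\mapsto rS+D^*(r)/2$ makes $r^*$ the unique equilibrium threshold for this $c$. The homogeneous condition $\bar r S\ge c-d/2$ then reads $S\ge D^*(r^*)\big/\bigl(2(\bar r-r^*)\bigr)$, while positivity of the bracket above reads $S<(d/2)\big/\bigl(r^*-\bar r_l(r^*)\bigr)$. These two bounds are jointly satisfiable precisely when $D^*(r^*)\bigl(r^*-\bar r_l(r^*)\bigr)<d(\bar r-r^*)$, which via the identity
\[
D^*(r^*)\,\bigl(r^*-\bar r_l(r^*)\bigr)\;=\;\int_0^{r^*}(r^*-r)\,\rho(r)\,dr
\]
holds for every $r^*$ sufficiently close to (but above) $r_{\min}:=\inf\mathrm{supp}(\rho)$: the left-hand side vanishes while $\bar r-r^*$ stays bounded below by the strict gap $\bar r-r_{\min}>0$ forced by strict heterogeneity. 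For a discrete demand one can take $r^*=r_{\min}$ directly, so that $\bar r_l=r^*$ and the upper bound on $S$ is vacuous, leaving a single elementary inequality. I expect the main obstacle to be verifying rigorously that the chosen parameters really induce the claimed threshold equilibrium---especially at atoms of $\rho$---which reduces to a standard intermediate-value argument applied to the monotone self-consistency map $r\mapsto rS+D^*(r)/2$.
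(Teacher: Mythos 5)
Your construction is correct and is essentially the paper's own argument in a different parametrization: both route the low-parameter players (those below a threshold pushed toward $r_{\min}$) through the zig-zag path in the heterogeneous equilibrium while calibrating the instance so the homogeneous equilibrium avoids it entirely, and both close the argument by letting the threshold approach the infimum of the support, where strict heterogeneity guarantees positive mass below the threshold and a strictly positive gap $\bar r-r^{*}$. The only cosmetic differences are that you tune the scalars $c$ and $S$ with unit-slope affine latencies (so the affine restriction is satisfied automatically, where the paper calibrates a strictly increasing $h$ at two points and then remarks it can be made linear) and that you carry the exact conditional mean $\bar r_l$ where the paper simply lower-bounds the zig-zag costs by $r_{\min}$.
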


\begin{proof}
We may assume WLOG that the demand is of unit size.

Let $\bar{r}$ be the average \divpar{ }and let $r_{\min}$ be the infimum of the \divpar{s}' distribution.
Let $r_0$ be any \divpar{ }and let $d_0$ be  the total demand with \divpar{ }$\leq r_0$.
Suppose that in addition, $r_{\min}\leq r_0 < \bar{r}$,
and the corresponding $d_0$ satisfies $d_0>0$. 
As  the demand is strictly heterogeneous, there must be such an $r_0$. 
Later on, $r_0$  will be specified further.  

In addition, we let $h$ be any continuous, strictly increasing cost function with 
$h(\frac{1}{2})=1$ and $h(\frac{1}{2}+\frac{d_0}{2})=1+\frac{\bar{r}-r_0}{2}$.

Consider the Braess graph $G_B=(\{s,u,v,t\},\{(s,u),(u,t),(u,v),(s,v),(v,t)\})$ 
with cost functions $\ell_{(s,u)}(x)=\ell_{(v,t)}(x)=h(x)$,
$\sigma_{(s,u)}(x)=\sigma_{(v,t)}(x)=0$,  $\ell_{(u,t)}(x)=\ell_{(s,v)}(x)=2+\frac{\bar{r}+r_0}{2}$,
and $\sigma_{(u,t)}(x)=\sigma_{(s,v)}(x)=0$, and $\ell_{(u,v)}(x)=1$ and $\sigma_{(u,v)}(x)=1$.  
The instance is shown in Figure~\ref{fig:BraGoodBad}.

\begin{figure}\center
\includegraphics[scale=0.35]{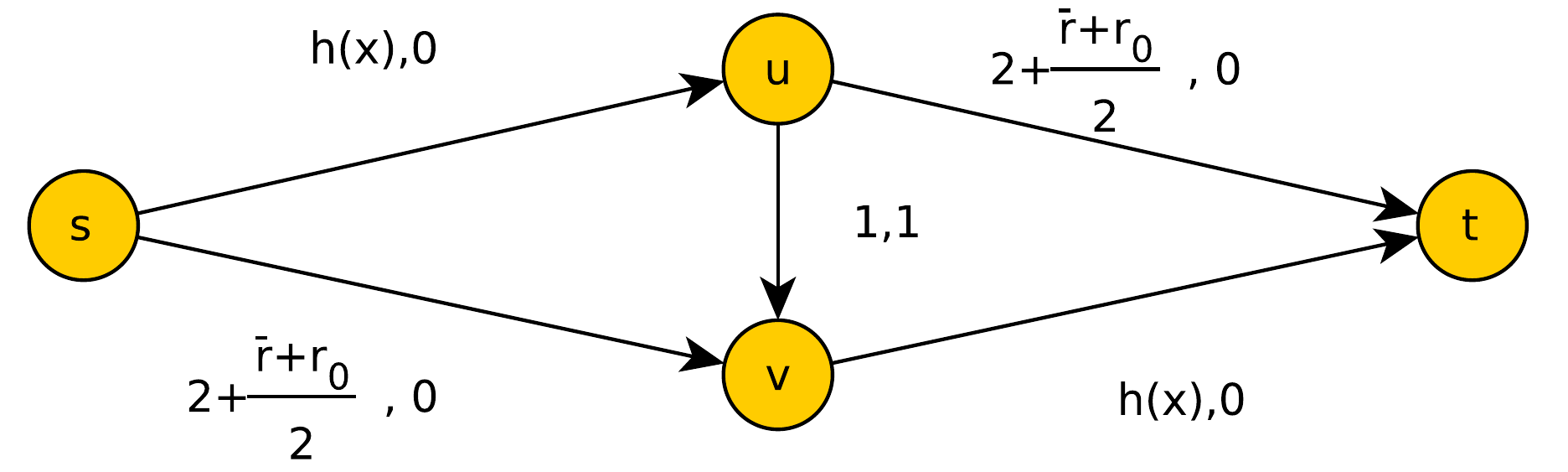} 
\caption {The Braess network with the edge functions  of Proposition \ref{prop:BraMayHurt}. The pair $a(x),b(x)$ on each edge denotes the latency and \stdev{ }functions, respectively.
}\label{fig:BraGoodBad}
\end{figure}

The heterogeneous equilibrium $g$ routes $d_0$ units of flow through the zig-zag path, 
i.e.\  path $s, u, v, t$; the rest of the flow is split between the upper and lower paths $s, u, t$ and $s, v, t$. 
This follows because with this routing, for players of \divpar{ }$r\leq r_0$, the zig-zag path costs
 $2(1+\frac{\bar{r}-r_0}{2})+1+r\leq 2(1+\frac{\bar{r}-r_0}{2})+1+r_0= 3+\bar{r}$ 
while the other paths cost $1+\frac{\bar{r}-r_0}{2}+2+\frac{\bar{r}+r_0}{2}=3+\bar{r}$, 
and for a player of \divpar{ }$r\geq r_0$, the upper and lower paths cost 
$1+\frac{\bar{r}-r_0}{2}+2+\frac{\bar{r}+r_0}{2}=3+\bar{r}$ 
while the zig-zag path costs $2(1+\frac{\bar{r}-r_0}{2})+1+r\geq 2(1+\frac{\bar{r}-r_0}{2})+1+r_0=3+\bar{r}$. 

To compute $C^{ht}(g)$, first note that players of \divpar{ }$r> r_0$, who have total demand equal to $1-d_0$,
 have cost $3+\bar{r}$, and all players of any \divpar{ }$r\leq r_0$ have cost 
$2(1+\frac{\bar{r}-r_0}{2})+1+r \geq 2(1+\frac{\bar{r}-r_0}{2})+1+r_{\min}=3+\bar{r}+r_{\min}-r_0$. 
The total cost of $g$ is thus 
$C^{ht}(g)\geq d_0(3+\bar{r}+r_{\min}-r_0)+(1-d_0)(3+\bar{r})=3+\bar{r}+d_0(r_{\min}-r_0)$.

The homogeneous equilibrium $f$ uses only the upper and lower paths. 
This follows because with this routing,
 for the average \divpar, the upper and lower paths cost 
$1+2+\frac{\bar{r}+r_0}{2}=3+\frac{\bar{r}+r_0}{2}$, 
while the zig-zag path costs $1+1+\bar{r}+1=3+\bar{r} > 3+\frac{\bar{r}+r_0}{2}$. 
The total cost of $f$ is thus $C^{hm}(f)=3+\frac{\bar{r}+r_0}{2}$.

Now we further specify $r_0$ so as to ensure $C^{ht}(g)>C^{hm}(f)$. 
By the above computations, it suffices to prove the existence of an $r_0$ that satisfies 
$r_{\min}\leq r_0<\bar{r}$ and   $d_0>0$ and in addition satisfies
  $$3+\bar{r}+d_0(r_{\min}-r_0)>3+\frac{\bar{r}+r_0}{2}\Longleftrightarrow \frac{\bar{r}-r_0}{2}+d_0(r_{\min}-r_0)>0.$$
As $r_0$ (which is $<\bar{r}$) goes to $r_{\min}$,
 the strictly positive quantity $\frac{\bar{r}-r_0}{2}$ increases 
and the non-positive quantity $d_0(r_{\min}-r_0)$ goes to $0$ 
(because $d_0$ decreases and $r_0$ goes to $r_{\min}$). 
On the other hand, by definition, $r_{\min}$ is the infimum of the \divpar{s}, 
and thus for any $\epsilon>0$, there is a positive demand with \divpar{ }$r\leq r_{\min}+\epsilon$.  
Therefore, there exists an $r_0$ satisfying the above inequality with $r_{\min}\leq r_0<\bar{r}$ and $d_0>0$, as needed.  

{The above construction can be extended to only use  affine functions. 
This can be done for example by changing function $h$ to the linear function that satisfies $h(0)=0$, 
$h(\frac{1}{2})=A$ and $h(\frac{1}{2}+\frac{d_0}{2})=A+\frac{\bar{r}-r_0}{2}$, and for that $A$
(in fact, $A= \frac{\bar r - r_0} {2 d_0}$), 
only changing $\ell_{(s,v)}$, $\ell_{(u,t)}$ and $\ell_{(u,v)}$ to $\ell_{(s,v)}(x)=\ell_{(u,t)}(x)=2A+\frac{\bar{r}+r_0}{2}$ and  $\ell_{(u,v)}(x)=A$.}
\end{proof}

\begin{theorem}\label{thm:SePaNecessary}
If $G$ is not series-parallel, then for any strictly heterogeneous  demand there are cost functions for which 
$C^{ht}(g)>C^{hm}(f)$.
\end{theorem}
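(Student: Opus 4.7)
The plan is to embed the Braess-graph instance from Proposition \ref{prop:BraMayHurt} into any non-series-parallel single-commodity network $G$, so that the ``diversity hurts'' inequality $C^{ht}(g_B) > C^{hm}(f_B)$ from that proposition lifts to $G$. The topological input I would invoke is the classical characterization (Duffin's theorem, in the directed two-terminal form used by \citet{DBLP:journals/geb/Milchtaich06}): an $\stot$ graph whose every edge lies on a simple $\stot$ path is series-parallel iff it does not contain a Wheatstone subdivision, namely four distinct vertices $s^\star,u^\star,v^\star,t^\star$ and seven internally vertex-disjoint directed paths $Q_0,\dots,Q_6$ realizing, in order, $s\to s^\star$, $s^\star\to u^\star$, $s^\star\to v^\star$, $u^\star\to v^\star$, $u^\star\to t^\star$, $v^\star\to t^\star$, $t^\star\to t$ (with $Q_0$ or $Q_6$ possibly trivial if $s^\star=s$ or $t^\star=t$). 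Since $G$ is not series-parallel, I fix such a subdivision.

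Given the strictly heterogeneous unit demand, I would apply Proposition \ref{prop:BraMayHurt} to obtain affine edge functions $(\ell^B_e,\std^B_e)_{e\in E(G_B)}$ on $G_B$ under which $C^{ht}(g_B)>C^{hm}(f_B)$, and transplant them onto $G$ as follows. On each of the five paths $Q_1,\dots,Q_5$ corresponding to a Braess edge, I designate one distinguished edge and give it the latency and deviation of that Braess edge, setting $(\ell_e,\std_e)\equiv(0,0)$ on every remaining edge of the path and also on every edge of $Q_0$ and $Q_6$. For every edge of $G$ outside the subdivision, I set $\ell_e(x)\equiv M$ and $\std_e(x)\equiv 0$ for a constant $M$ chosen larger than the maximum possible cost of any embedded path (under any feasible flow and any $r\in[0,r_{\max}]$). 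Each $\ell_e+r\std_e$ remains non-decreasing, so this is a legal heterogeneous instance.

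Then I would argue that the choice of $M$ forces every equilibrium flow---heterogeneous or homogeneous---to be supported entirely on the subdivision, since any $\stot$ path using an edge outside it costs at least $M$ while any embedded path costs $O(1)$, for every user type. Internal vertex-disjointness of the $Q_i$'s guarantees that the flow on each $Q_i$ is determined by flow conservation at $s^\star,u^\star,v^\star,t^\star$, and is constant along $Q_i$. Projecting these flows onto $G_B$ (aggregating each $Q_i$ into its corresponding Braess edge) produces heterogeneous and homogeneous equilibria on $G_B$ with exactly the same per-type costs as $g$ and $f$ on $G$; since $Q_0$ and $Q_6$ contribute zero cost, I conclude $C^{ht}(g)=C^{ht}(g_B)>C^{hm}(f_B)=C^{hm}(f)$.

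The main obstacle will be Step~1: producing the Wheatstone subdivision with \emph{internally vertex-disjoint} directed paths rather than merely edge-disjoint ones. Shared internal vertices between the $Q_i$'s would allow flow to leak between Braess-edge paths and could destroy the projection onto $G_B$, invalidating the transplant; ruling this out is precisely the content of the directed Duffin/Milchtaich characterization, which I would cite rather than reprove. Once the subdivision is in hand, the rest is a routine prohibitive-cost verification and an appeal to Proposition \ref{prop:BraMayHurt}.
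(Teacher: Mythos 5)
Your proposal is correct and follows essentially the same route as the paper: both embed the Braess instance of Proposition~\ref{prop:BraMayHurt} into the non-series-parallel network via the Milchtaich/Duffin characterization, assign prohibitively large constant latencies (your $M$, the paper's $M=2(3+\bar r)$) to all edges outside the embedded copy so that both equilibria are supported on it, and transfer the strict inequality from the Braess instance. The only difference is presentational---you phrase the embedding as a Wheatstone subdivision with internally vertex-disjoint paths, while the paper phrases it as a sequence of operations (edge subdivisions, edge additions, terminal extension) starting from $G_B$; these are equivalent and your disjointness concern is resolved by the same cited characterization.
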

We defer the proof to the appendix.
Instead, in the next section, we will enter into the more challenging construction
needed for the multi-commodity case.

\hide{
\begin{proof}
If $G$ is not series-parallel then the Braess graph can be embedded in it 
(see e.g. \cite{DBLP:journals/geb/Milchtaich06} or \cite{Valdes:1979:RSP:800135.804393}). 
Thus, starting from the Braess network $G_B$, by subdividing edges, 
adding edges and extending one of the terminals by one edge, we can obtain $G$. 
Fix such a sequence of operations. 
For the given heterogeneous demand, we start from the Braess instance given by Proposition \ref{prop:BraMayHurt} 
and apply the sequence of operations one by one.
Each time an edge addition occurs, 
we give the new edge a constant latency function equal to some large $M$ and standard deviation equal to $0$, 
each time an extension of the terminal occurs we give the new edge a constant latency and standard deviation equal to $0$, 
while each time an edge division occurs, if it is an edge with latency function $M$ 
we give both edges latency function equal to $M$ and standard deviation equal to $0$, 
otherwise we give one of the two edges the latency and the standard deviation functions of the edge that got divided 
and we give the other one a constant latency and standard deviation equal to $0$. 

It is not hard to see that taking $M=2(3+\bar{r})$, i.e.\ 
more than double the heterogeneous cost of the instance of Proposition \ref{prop:BraMayHurt},
(or $M=2(3A+\bar{r})$ if we must only use affine  functions),
 suffices to ensure that all edges having latency $M$ and standard deviation $0$
receive zero flow in both the heterogeneous equilibrium $g$ and the homogeneous equilibrium $f$. 
In more detail, the only $\stot$ routes that may have cost $<M$ 
are those that starting from $s$ reach, with zero cost, 
some $s'$ that corresponds to the $s$ of the instance of Proposition~\ref{prop:BraMayHurt}, 
follow some path that corresponds to one of the upper, zig-zag, or lower paths of the instance of 
Proposition~\ref{prop:BraMayHurt}, with corresponding cost, 
reach some $t'$ that corresponds to $t$ of the instance of Proposition~\ref{prop:BraMayHurt}
and from there reach $t$ with zero cost. 
This can be formally proved by induction on the number of embedding steps.
Thus,  $C^{ht}(g)>C^{hm}(f)$ can be derived in the same way as in Proposition~\ref{prop:BraMayHurt}
and that is enough to prove the theorem. 
\end{proof}
}



%
%
%
%
%
%
%
%
%
%
%
%
%
%
%
%
%
%

\section{Topology of Multi-Commodity Networks for which Diversity Helps}\label{sec:TopMulti}

In this section we fully characterize the topology of multi-commodity networks for which, 
with any choice of {
heterogeneous} average-respecting demand and edge functions, diversity helps. 
Because of Theorem~\ref{thm:SePaNecessary}, if we require diversity to help on any instance on $G$, 
then for any commodity $i$,  $G_i$ needs to be series-parallel. 
Yet, as we shall see in Proposition~\ref{prop:JustTwoPaths}, this is not enough.
We also need to understand the overlaps of the $G_i$'s.
It turns out that the allowable overlaps are exactly captured by the topology of block-matching networks 
(Theorems~\ref{thm:MultiSePaSufficient} and~\ref{thm:blockMatchedNecessary}).
{Ommited proofs can be found in Appendix~\ref{app:ommitedProofs}.}

\subsection{Sufficiency}

Using Theorem \ref{thm:SePaSufficient}, we can obtain 
 an analogous theorem  for the multi-commodity case.

\begin{theorem}\label{thm:MultiSePaSufficient}
Let $G$ be a  $k$-commodity block-matching network. Then, for  any instance on  $G$ with average-respecting demand $C^{ht}(g)\leq C^{hm}(f).  $
\end{theorem}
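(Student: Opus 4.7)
The plan is to reduce the multi-commodity claim to a block-by-block comparison, exploiting both the block-matching structure and the average-respecting hypothesis $\bar{r}^k=\bar{r}$ for every commodity $k$. The key observation is that if a block $B$ belongs to the block decomposition of $G_k$, then every simple $s_k$-$t_k$ path in $G_k$ crosses $B$ from its source terminal to its sink terminal, so commodity $k$ routes all $d_k$ units of its flow through $B$. Setting $S_B := \{k : B\text{ is a block of }G_k\}$ and $D_B := \sum_{k\in S_B} d_k$, block-matching (shared blocks share both edges and direction) implies that both $f$ and $g$ induce a flow of magnitude $D_B$ across $B$ in the same direction.

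First I would apply Lemma~\ref{lm:greaterflow} in the series-parallel network $B$ to the combined flows $f|_B$ and $g|_B$ (both of magnitude $D_B$) to obtain a sub-path $P^B$ from the source terminal of $B$ to its sink terminal with $f_e>0$ and $f_e\ge g_e$ for every $e\in P^B$. For each commodity $k$, I define $P_k$ to be the concatenation of the $P^B$'s over the blocks of $G_k$; block-matching ensures that this is a legitimate $s_k$-$t_k$ path. Using the heterogeneous equilibrium bound (no player in commodity $k$ benefits from deviating to $P_k$) and the average-respecting assumption,
\[
C^{ht}(g)\;\le\;\sum_{k} d_k\bigl(\ell_{P_k}(g)+\bar{r}\,\sigma_{P_k}(g)\bigr)\;=\;\sum_{B} D_B\bigl(\ell_{P^B}(g)+\bar{r}\,\sigma_{P^B}(g)\bigr),
\]
after interchanging the sums over commodities and blocks. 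Monotonicity of $\ell_e+\bar{r}\sigma_e$ together with $f_e\ge g_e$ on $P^B$ then yields $\ell_{P^B}(g)+\bar{r}\sigma_{P^B}(g)\le \ell_{P^B}(f)+\bar{r}\sigma_{P^B}(f)$ for every $B$.

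To close the argument, I would observe that under $f$ every player has parameter $\bar{r}$, so the combined flow $f|_B$ is itself a Wardrop equilibrium with parameter $\bar{r}$ in the series-parallel network $B$. Because $f_e>0$ on every edge of $P^B$ inside the series-parallel $B$, an induction on the series-parallel decomposition of $B$ shows that $P^B$ can be extracted from some flow decomposition of $f|_B$; hence $P^B$ is a used sub-path and its $\bar{r}$-cost equals the common minimum sub-path cost $m_B$ of $B$ under $f$. Expanding the equilibrium cost of each commodity block by block then yields $C^{hm}(f)=\sum_B D_B\,m_B$, and chaining all the inequalities gives $C^{ht}(g)\le C^{hm}(f)$.

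The main obstacle I anticipate is producing a single sub-path $P^B$ per block that serves every commodity in $S_B$ while simultaneously satisfying $f_e\ge g_e$ and being a used sub-path of $f|_B$. Block-matching is precisely what makes this possible: it lets us view the total flows through $B$ as honest flows in one series-parallel network so that Lemma~\ref{lm:greaterflow} applies, and it guarantees the glued $P_k$'s are valid for every commodity. The average-respecting assumption is what promotes $f|_B$ to a standard homogeneous equilibrium with parameter $\bar{r}$, making the equality $\ell_{P^B}(f)+\bar{r}\sigma_{P^B}(f)=m_B$ a clean consequence of Wardrop.
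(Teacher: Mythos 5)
Your proposal is correct and follows essentially the same route as the paper's proof: block-matching forces every commodity sharing a block to send its entire demand across it, so the total flow through each block is identical under $f$ and $g$, and the comparison then reduces to a per-block application of the single-commodity machinery, summed over all blocks. The only difference is presentational --- the paper invokes Theorem~\ref{thm:SePaSufficient} as a black box on each block, whereas you inline its proof (Lemma~\ref{lm:greaterflow} plus the explicit cost accounting with the witness paths $P^B$ and the average-respecting hypothesis), which amounts to the same argument.
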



 \begin{proof}
 Consider Commodity $i$ and
let $G_i=s_iB_1v_1\ldots v_{b_i-1}B_{b_i}t_i$ be its block representation.
Consider an arbitrary $B_j$ with terminals $v_{j-1}$ and $v_j$. 
Because $G$ is block-matching, any other Commodity $l$ either contains $B_j$ {as a block} in its block representation 
or contains none of its edges.
Also, recall that, as explained in the preliminaries section, if 
$G_l$ contains $B_j$, it has the same terminals $v_{j-1}$ and $v_j$. 
This implies that under any routing of the demand, either all of $l$'s demand goes through $B_j$ or none of it does. 
This means that under both equilibria $g$ and $f$, 
the total traffic routed from $v_{j-1}$ to $v_j$ through $B_j$ is the same 
 which further  implies that, if restricted to the block, the cost of the heterogeneous equilibrium is less than  or equal to that of the homogeneous equilibrium: $C^{ht}(g)\Big{|}_{B_j}\leq C^{hm}(f)\Big{|}_{B_j}$. For the latter, recall that the demand is average-respecting and thus $f$ has a single average parameter. 
On the other hand, if we let $\mathcal{B}$ be the set of all the blocks of all commodities, then $C^{ht}(g)=\sum_{B\in{\mathcal{B}}}C^{ht}(g)\Big{|}_B$ and $C^{hm}(f)=\sum_{B\in{\mathcal{B}}}C^{hm}(f)\Big{|}_B$ which using the previous inequality proves the result.
 \end{proof}

\hide{
\begin{proof}
We will proceed as in Theorem \ref{thm:SePaSufficient}, i.e.\ we will use Lemma~\ref{lm:MultiSePaPath} to obtain,
for each Commodity $i$, a path $P_i$ used by $f$ such that 
$\ell_{p_i}(g)\leq \ell_{p_i}(f)$ and for any $r\in [0,r_{\max}]$
: $c^r_{p_i}(g)\leq c^r_{p_i}(f)$, 
and then mimic the proof of Theorem \ref{thm:SePaSufficient} to show that this suffices to yield 
$C^{ht}(g)\leq C^{hm}(f).  $ 

To begin with, for each Commodity $i$, by fixing $r=\bar{r}^i$, 
i.e.\ the average aversion for Commodity $i$, using Lemma \ref{lm:MultiSePaPath},
we obtain  a path $P_i$ used by $f$   such that 
$c^{\bar{r}^i}_{p_i}(g)\leq c^{\bar{r}^i}_{p_i}(f)$. 
By the equilibrium conditions, for any player of Commodity $i$ and \divpar{ }$r$, for any $r$, the risk-averse cost she computes under $g$ is $c^{i,r}(g)\leq \sum_{e\in p_i} {\ell_e (g_e)} +r \sqrt{\sum_{e\in p_i} {\std^2_e(g_e)}} $.
In other words, there is no incentive to deviate to path $P_i$ (if not already on it). 

If the aversion factors are discrete, given by a demand vector $(d^i_1,\ldots,d^i_n)$ for Commodity $i$ of aversion types $(r^i_1,\ldots,r^i_n)$ respectively, the risk-averse cost is 
$$C^{i,ht}(g)
= \sum_{j=1\ldots n}d^i_jc^{i,r_j^i}(g) \leq 
\sum_{j=1\ldots n}d^i_j\Big(\sum_{e\in p_i} {\ell_e (g_e)} +r^i_j \sqrt{\sum_{e\in p_i} {\std^2_e(g_e)}}\Big)=d^i\Big(\ell_{p_i}(g)+\bar{r}^i   \std_{p_i}(g)\Big),$$
where $C^{i,ht}(g)$ is the total cost incurred by players of Commodity $i$,  with the last equality following as
the total demand for Commodity $i$ is $d^i$, and the average aversion satisfies $d^i\bar{r}^i=\sum_{j=1\ldots n}d^i_jr^i_j$.

If the aversion factors are continuously distributed on the demand with density function  $\rho_i(r)$ for commodity $i$, with $r^i_{\min}$ and $r^i_{\max}$ being their infimum and supremum respectively, the cost is 
$$C^{i,ht}(g)
= \int_{r^i_{\min}}^{r^i_{\max}} \rho_i(r)c^{i,r}(g)dr \leq  
\int_{r^i_{\min}}^{r^i_{\max}} \rho_i(r)\Big(\sum_{e\in p_i} {\ell_e (g_e)} +r \sqrt{\sum_{e\in p_i} {\std^2_e(g_e)}}\Big)dr=d^i\Big(\ell_{p_i}(g)+\bar{r}^i \std_{p_i}(g)\Big),$$
with the last equality following as the total demand for commodity 
$i$ is $d^i=\int_{r_{\min}}^{r_{\max}} \rho_i(r)dr$,
and the average aversion satisfies $d^i\bar{r}^i=\int_{r_{\min}}^{r_{\max}} r\rho_i(r)dr$.

Thus, whether the demand is discrete or continuous, we obtain $C^{i,ht}(g)
\leq d^i\Big(\ell_{p_i}(g)+\bar{r}^i   \std_{p_i}(g)\Big),$ which by the choice of $P_i$, 
which is used by $f$, yields
$$C^{i,ht}(g)
\leq d^i\Big(\ell_{p_i}(g)+ \bar{r}^i   \std_{p_i}(g)\Big)\leq d^i\Big(\ell_{p_i}(f)+ \bar{r}^i   \std_{p_i}(f)\Big)=C^{i,hm}(f).$$

Finally, we sum over all the commodities and obtain
$$C^{ht}(g)=\sum_{i=1}^kC^{i,ht}(g)\leq \sum_{i=1}^kC^{i,ht}(f)=C^{ht}(f),$$
as needed.
\end{proof}
}

\hide{
\begin{proof}
We will mimic the proof of Theorem \ref{thm:SePaSufficient}, i.e.\ we will use Lemma~\ref{lm:MultiSePaPath} to obtain,
for each commodity $i$, a path $P_i$ used by $f$ such that 
$\ell_{p_i}(g)\leq \ell_{p_i}(f)$ and for any $r\in [0,r_{\max}]$
: $c^r_{p_i}(g)\leq c^r_{p_i}(f)$, 
and then mimic the proof of Lemma~\ref{lm:pathHmVsHt} to show that this suffices to yield 
$C^{ht}(g)\leq C^{hm}(f).  $ 

To begin with, for each commodity $i$, by fixing $r=\bar{r}^i$, 
i.e.\ the average aversion for commodity $i$, using Lemma \ref{lm:MultiSePaPath},
we obtain  a path $P_i$ used by $f$   such that 
$c^{\bar{r}^i}_{p_i}(g)\leq c^{\bar{r}^i}_{p_i}(f)$. 
By the equilibrium conditions, for any player of commodity $i$ and risk aversion type $r$, for any $r$, the risk-averse cost she computes under $g$ is $c^{i,r}(g)\leq \sum_{e\in p_i} {\ell_e (g_e)} +r \sqrt{\sum_{e\in p_i} {\std^2_e(g_e)}} $.
In other words, there is no incentive to deviate to path $P_i$ (if not already on it). 

\thl{Acting as in the proof of Lemma \ref{lm:pathHmVsHt}, whether the demand is discrete or continuous, we obtain $$C^{i,ht}(g)
\leq d^i\Big(\ell_{p_i}(g)+ \bar{r}^i   \std_{p_i}(g)\Big)\leq d^i\Big(\ell_{p_i}(f)+ \bar{r}^i   \std_{p_i}(f)\Big)=C^{i,hm}(f),$$
with the exact details given in the Appendix, Section \ref{app:thm:MultiSePaSufficient}}.

Finally, we sum over all the commodities and obtain
$$C^{ht}(g)=\sum_{i=1}^kC^{i,ht}(g)\leq \sum_{i=1}^kC^{i,ht}(f)=C^{ht}(f),$$
as needed.
\end{proof}
}

\subsection{Necessity}

To derive the necessity we first give an example of a non-block-matching network for which diversity hurts (Proposition~\ref{prop:JustTwoPaths}). Then, after proving some properties for commodities for which
 the corresponding $G_i$ are series-parallel (Lemmas~\ref{lm:SePaProperties1} and~\ref{lm:SePaProperties2}), 
we mimic the above example to obtain contradicting instances for networks that are not block-matching 
and thereby prove Theorem~\ref{thm:blockMatchedNecessary}.

\hide{
We focus on showing that if a multi-commodity network has the property that for
any edge functions 
diversity helps then it must be a block-matching network. 
For this, we first give a simple instance  on a multi-commodity network that is not block-matching  where diversity hurts. This instance will be mimicked (in some sense) 
in order to prove the desired result. 
}



Let $G$ be the following 2-commodity network, {depicted in Figure~\ref{fig:justTwoPaths}}. $G_2$, the subnetwork for Commodity 2,
consists of a simple $s_2\dsh t_2$ path $P_2$, while $G_1$, the subnetwork for Commodity 1, 
is formed from two simple $s_1\dsh t_1$ paths named $P_1$ and $P_3$;
$P_1$ and $P_2$ are disjoint, while $P_2$ and $P_3$ share a single edge, named $e_2$.
Finally $e_1$ is an edge on $P_1$ but not on $P_3$.

\begin{proposition}\label{prop:JustTwoPaths}
There exist edge functions and demands on $G$ for which diversity hurts.
\end{proposition}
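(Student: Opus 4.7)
I would mimic the Braess-style construction of Proposition~\ref{prop:BraMayHurt}, using the cross-commodity shared edge $e_2$ to play the role of the zig-zag edge. The intuition is: set up costs so that in the homogeneous case, every player of Commodity~1 prefers $P_1$ (because $P_3$ looks bad once its \stdev{} is combined with the average $\bar r$), so no Commodity~1 flow touches $e_2$; but in the heterogeneous case, the low-$r$ players of Commodity~1 ignore the \stdev{} and are lured onto $P_3$, inflating the congestion of $e_2$ and thereby hurting all of Commodity~2 (who has no alternative to $P_2$).

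\textbf{Construction.} Take unit demand on each commodity, average-respecting, with Commodity~2 a single \divpar{} equal to $\bar r$ and Commodity~1 strictly heterogeneous with average $\bar r$ (for concreteness one can use two point masses at $r_{\min}<\bar r<r_{\max}$ analogous to the $r_0$ vs.\ $\bar r$ choice in Proposition~\ref{prop:BraMayHurt}). On $e_2$ put a strictly increasing latency (e.g.\ $\ell_{e_2}(x)=x$) and zero \stdev. On $e_1$ (in $P_1\setminus P_3$) put a moderate constant latency chosen so that the low-$r$ players are indifferent between $P_1$ and $P_3$ only once a nontrivial slice of them have moved onto $P_3$. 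On the remaining edges of $P_3\setminus\{e_2\}$ put a small constant latency plus a positive \stdev{} $\sigma_0$ large enough that, at the homogeneous $\bar r$, $P_3$ is strictly worse than $P_1$ for Commodity~1. All remaining edges (those of $P_2\setminus\{e_2\}$, etc.) can be given zero latency and zero \stdev.

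\textbf{Verification plan.} For the homogeneous instance, check that the proposed flow (all of Commodity~1 on $P_1$, all of Commodity~2 on $P_2$) satisfies the Wardrop conditions: on Commodity~1 one needs $\ell_{P_1}+\bar r\,\sigma_{P_1}\le \ell_{P_3}+\bar r\,\sigma_{P_3}$ evaluated at the homogeneous flow (here the right-hand side is inflated by $\bar r \sigma_0$), and Commodity~2 has no alternative. Record its total cost $C^{hm}(f)$. For the heterogeneous instance, verify that the flow with (i)~high-$r$ Commodity-1 players on $P_1$, (ii)~low-$r$ Commodity-1 players split between $P_1$ and $P_3$ so as to equalize their \emph{own} cost (which depends only on $\ell$), and (iii)~Commodity~2 on $P_2$, is an equilibrium. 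Compute $C^{ht}(g)$ and show $C^{ht}(g)>C^{hm}(f)$ by choosing the constants so that the induced cost increase on Commodity~2 from the higher flow on $e_2$ strictly dominates the latency savings accrued by the low-$r$ Commodity-1 players.

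\textbf{Main obstacle.} The only real work is the parameter choice: low-$r$ players \emph{strictly} benefit from switching to $P_3$, so to force a net harm one must calibrate the slope of $\ell_{e_2}$, the gap between $\ell_{e_1}$ and $\ell_{P_3\setminus\{e_2\}}$, and the magnitude of $\sigma_0$, so that (a) at the homogeneous equilibrium $P_3$ is strictly dominated by $P_1$, (b) at the heterogeneous equilibrium a positive amount of low-$r$ flow uses $P_3$, and (c) the externality on Commodity~2 outweighs Commodity-1's savings. This can be done with affine functions in the same spirit as the end of Proposition~\ref{prop:BraMayHurt}; the inequality is then a short arithmetic check.
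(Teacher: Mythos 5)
Your construction is essentially the paper's, in mirror image: the paper places the \stdev{} on the $P_1$-exclusive edge $e_1$ (so the \emph{high}-$r$ players of Commodity~1 are pushed onto the shared path $P_3$ in the heterogeneous equilibrium, while the $r=0$ players stay on $P_1$), whereas you place the \stdev{} on $P_3\setminus\{e_2\}$ so the \emph{low}-$r$ players are pulled onto $P_3$; in both versions the harm is transmitted to Commodity~2 entirely through the strictly increasing latency on the shared edge $e_2$, and Commodity~1's aggregate cost is essentially unchanged between the two equilibria. The calibration you defer does go through (e.g.\ $r$-values $0$ and $4/3$ with masses $1/4$ and $3/4$, Commodity~2 at $\bar r=1$, $\ell_{e_1}\equiv 5$, $\sigma_0=3$ on $P_3\setminus\{e_2\}$, and $\ell_{e_2}$ strictly increasing with $\ell_{e_2}(1)=3$, $\ell_{e_2}(5/4)=5$ gives $C^{ht}=10>8=C^{hm}$), so the proposal is correct and matches the paper's argument.
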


\begin{proof}
Let  $d_1=d_2=1$  be the total demands for Commodities $1$ and $2$ respectively. 
Let $G_1$'s demand consist of $\tfrac 34$ \ntrl{ }players and $\tfrac 14$  players 
with \divpar{ }equal to $4$, and let $G_2$'s demand   consist of  players 
with \divpar{ }equal to $1$.
To all edges other than $e_1$ and $e_2$, assign latency and \stdev{ }functions equal to $0$. 
Assign edge $e_1$ the constant latency function $\ell_1(x)=1$ and the constant \stdev{ }function $\std_1(x)=2$.
Assign edge $e_2$ the constant \stdev{ }$\std_2=0$, and as latency function any $\ell_2$
 that is continuous and strictly increasing, with $\ell_2(1)=3$ and $\ell_2(\frac{5}{4})=9$. 

\begin{figure}\center
\includegraphics[scale=0.48]{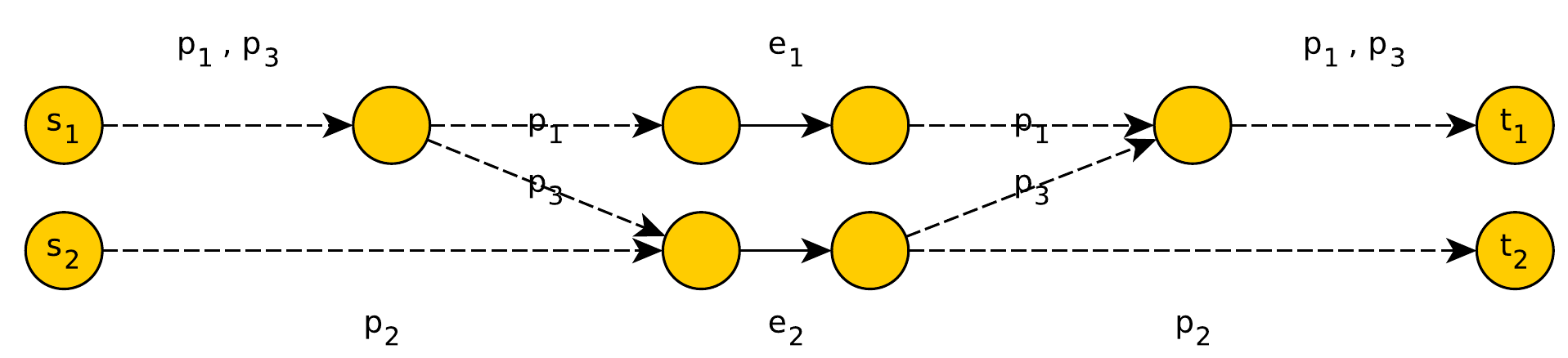} 
\caption {The network for Proposition \ref{prop:JustTwoPaths}
\label{fig:justTwoPaths}}
\end{figure}

The equilibrium costs depend only on the flow through edges $e_1$ and $e_2$, as all other edges have {cost $0$}. 
Also note that at least $1$ unit of flow will go through $e_2$ {as} this is the only route for $G_2$'s demand.

In the heterogeneous equilibrium $g$ of this instance, $\tfrac 34$ units of flow are routed through $e_1$, and  
$1+\tfrac 14$ units of flow are routed through $e_2$,
as then the \ntrl{ }players of $G_1$ compute a cost for $P_1$ equal to $1$, and a cost for $P_3$ equal to $9$,
and thus prefer $P_1$, while the remaining players of $G_1$ compute a cost equal to $9$ for both $P_1$ and $P_3$,
and thus stay on $P_3$ (recall that $\ell_2$ is strictly increasing). 
Consequently, the cost $C^{ht}(g)$ of the heterogeneous equilibrium is  
$C^{ht}(g)=1\cdot \tfrac 34 d_1 + 9 \cdot \tfrac 14  d_1+9\cdot d_2=12$.

In the homogeneous equilibrium $f$, $G_1$'s demand is all routed through  $e_1$. 
This is because the average \divpar{ }equals $1$ and thus $P_1$ and $P_3$ are both computed to cost $3$ 
(recall again that $\ell_2$ is strictly increasing). 
Thus the cost $C^{hm}(f)$ of the homogeneous equilibrium is $C^{hm}(f)=3\cdot d_1+3\cdot d_2=6$. 
Consequently,  
 $C^{ht}(g)>C^{hm}(f)$, as needed.
\end{proof}

\begin{remark}\label{rem:JustTwoPaths}
The result would still hold if the common portion of $P_1$ and $P_3$ had a positive cost instead of zero cost.
Again, it would still hold if the portion of $P_2$ after $e_2$ had a positive cost instead of zero cost.
This is close to the way we will mimic this instance in the proof of Theorem~\ref{thm:blockMatchedNecessary}. 
The idea, in both equilibria,  is to route all the flow of Commodity $1$ through two paths, 
$P_1$ and $P_3$, each containing one of $e_1$ or $e_2$, 
and to route the flow of Commodity $2$ through a path, $P_2$, that contains $e_2$. 
This is done by putting (relatively) big constants as latency functions on all the edges that depart 
from vertices of the corresponding paths up to the point where $e_1$ or $e_2$ is reached, 
though some caution is needed. 
Then, the relation of the equilibria costs will follow as in Proposition~\ref{prop:JustTwoPaths}, 
as the exact same edge functions {will} be used for edges $e_1$ and $e_2$.
This will be specified precisely when we give the construction.
\end{remark}





Next, we state 
some useful properties of series-parallel networks that are based on their block structure (the proofs are in the appendix). 
They will be used in the proof of Theorem~\ref{thm:blockMatchedNecessary}.

\begin{lemma}\label{lm:SePaProperties1}
Let $i$ be a commodity of network $G$ and suppose that $G_i$ is series-parallel.

(i)  Let $B_1$ and $B_2$ be distinct blocks of $G_i$, with $B_1$ preceding $B_2$.
There is no edge in $G$ from an internal vertex of $B_1$ to an internal vertex of $B_2$.


(ii) Let $u$ and $v$ be vertices in $G_i$. 
If $(u,v)$  is an edge of $G$ then there is a simple $s_i \dsh t_i$ path in $G_i$
that contains both $u$ and $v$ (not necessarily in that order).
\end{lemma}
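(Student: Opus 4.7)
For part (i), the plan is to argue by contradiction: suppose such an edge $(u,v)$ exists, where in the block representation $G_i = s_i B_1 v_1 \ldots B_b t_i$ we have $B_1 = B_{j_1}$ and $B_2 = B_{j_2}$ with $j_1 < j_2$. Since $u$ is internal to the series-parallel block $B_{j_1}$ with terminals $v_{j_1-1}, v_{j_1}$, it lies on some simple $v_{j_1-1}$-$v_{j_1}$ path inside $B_{j_1}$, whose prefix gives a simple $v_{j_1-1}$-$u$ path avoiding $v_{j_1}$. Symmetrically, $v$ admits a simple $v$-$v_{j_2}$ path inside $B_{j_2}$ avoiding $v_{j_2-1}$. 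Concatenating standard paths through the blocks before $B_{j_1}$, the prefix to $u$, the edge $(u,v)$, the suffix from $v$, and standard paths through the blocks after $B_{j_2}$, produces a simple $s_i$-$t_i$ path in $G$. Any such path is automatically contained in $G_i$; yet it bypasses the separator $v_{j_1}$, contradicting the fact that $v_{j_1}$ is a cut vertex between $s_i$ and $t_i$ in $G_i$ (a defining property of the block representation).

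For (ii), I will proceed by case analysis on the block positions of $u$ and $v$. If either vertex is a separator or equals $s_i$ or $t_i$, it lies on every simple $s_i$-$t_i$ path in $G_i$, so it suffices to pick any such path through the other vertex (which exists because that vertex belongs to $G_i$). If $u$ and $v$ are internal to distinct blocks $B_{j_u}$ and $B_{j_v}$, with WLOG $j_u < j_v$, then stitching a $v_{j_u-1}$-$v_{j_u}$ path through $u$ in $B_{j_u}$ and a $v_{j_v-1}$-$v_{j_v}$ path through $v$ in $B_{j_v}$, together with standard paths in the remaining blocks, yields the desired simple $s_i$-$t_i$ path; observe that this subcase does not even use the hypothesis $(u,v)\in E(G)$.

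The crucial remaining subcase is when $u$ and $v$ are both internal to the same block $B_j$, and here I will argue that a common $v_{j-1}$-$v_j$ path through $u$ and $v$ in $B_j$ must exist. Suppose not. Then the recursive SP structure of $B_j$ forces $u$ and $v$ to sit in the interiors of two distinct parallel branches of some parallel composition with terminals $c, d$ inside $B_j$'s SP-tree, where $c, d$ are cut vertices of $B_j$. I then exploit the edge $(u,v)\in E(G)$ to build a simple $s_i$-$t_i$ path in $G$ that traverses $(u,v)$: go from $s_i$ to $v_{j-1}$ to $c$, take a $c$-$u$ path avoiding $d$ within the first parallel child (which exists since $u$ is internal to that child), cross $(u,v)$, take a $v$-$d$ path avoiding $c$ within the second, then continue $d \to v_j \to t_i$. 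This certifies $(u,v)\in E(G_i)$; but then $B_j$ contains a Braess-type configuration (the two parallel branches joined by the crossing edge), contradicting the series-parallelity of $G_i$. Hence a common path through $u,v$ in $B_j$ exists, and it extends as in the previous case to a full simple $s_i$-$t_i$ path. The main technical obstacle will be carefully justifying that the constructed $s_i$-$t_i$ path in $G$ is simple, which rests on the cut-vertex property of $c,d$ in $B_j$ and the vertex-disjointness of the two parallel branches away from $\{c,d\}$.
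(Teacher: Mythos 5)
Your argument is correct and follows essentially the same route as the paper's: for (i), the edge would yield a simple $s_i\dsh t_i$ path avoiding a separator, and for (ii), the contrapositive reduces to the observation that the minimal series-parallel subnetwork containing both vertices must be a parallel composition with $u$ and $v$ internal to distinct branches, whereupon the edge $(u,v)$ would lie on a simple $s_i\dsh t_i$ path, hence belong to $G_i$, and destroy its series-parallel structure. Your version is merely more explicit (the block-position case analysis and the hand-built simple path), whereas the paper applies the minimal-subnetwork argument directly to $G_i$ without first casing on blocks; the content is the same.
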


\hide{
\begin{proof}
For (i), if there were such an edge then a simple $s_i\dsh t_i$ path would be created that avoids the separators that lie 
between $B_1$ and $B_2$, contradicting the definition of $G_i$'s  block structure. 


For (ii),  let $u$ and $v$ be two vertices in $G_i$ such that there is no simple $s_i \dsh t_i$ path in $G_i$ 
that contains both of them. 
This implies that there is no edge between them in $G_i$.
Now, in the series-parallel decomposition of $G_i$, let $B$ be the smallest series-parallel subnetwork 
containing both $u$ and $v$. 
By the choice of being smallest and the fact that there is no edge in $G_i$ between $u$ and $v$, 
$B$ must be a composition of a $B_1$ containing $u$ and a $B_2$ containing $v$. 
$B_1$ and $B_2$ are not connected in series because then there would be an $s_i\dsh t_i$ path in $G_i$ 
containing both $u$ and $v$. 
Therefore, $B_1$ and $B_2$ are connected in parallel;
thus there cannot be any edge in $G$ between $u$ and $v$
or else it would belong in some simple $s_i \dsh t_i$ path 
and therefore belong to $G_i$, violating $G_i$'s series-parallel structure.
\end{proof}
}

\hide{
\begin{proof}
For (i), if there were such an edge then a simple $s_i\dsh t_i$ path would be created that avoids the separators that lie 
between $B_1$ and $B_2$, contradicting the definition of $G_i$'s  block structure. 


For (ii),  let $u$ and $v$ be two vertices in $G_i$ such that there is no simple $s_i \dsh t_i$ path in $G_i$ 
that contains both of them. 
This implies that there is no edge between them in $G_i$.
Now, in the series-parallel decomposition of $G_i$, let $B$ be the smallest series-parallel subnetwork 
containing both $u$ and $v$. 
By the choice of being smallest and the fact that there is no edge in $G_i$ between $u$ and $v$, 
$B$ must be a composition of a $B_1$ containing $u$ and a $B_2$ containing $v$. 
$B_1$ and $B_2$ are not connected in series because then there would be an $s_i\dsh t_i$ path in $G_i$ 
containing both $u$ and $v$. 
Therefore, $B_1$ and $B_2$ are connected in parallel;
thus there cannot be any edge in $G$ between $u$ and $v$
or else it would belong in some simple $s_i \dsh t_i$ path 
and therefore belong to $G_i$, violating $G_i$'s series-parallel structure.
\end{proof}
}

\begin{lemma}\label{lm:SePaProperties2}
Let $i$ be a commodity of network $G$
and suppose that $G_i$ is series-parallel with block representation
$G_i=s_iB_1v_1\ldots v_{b-1}B_{b}t_i$.
Let $w$ be a vertex of $B_k$ for some $k\in [b]$. 

(i) Suppose that $w\neq v_{k-1}$,
and let $P$ be an arbitrary path from a vertex $u$, in a block $A$ that precedes $B_k$ in the block representation,
to vertex $w$.
Let $w'$ be the first vertex on $P$ that is an internal vertex in $B_k$, if any.
Then $P$ 
must include an edge of $B_k$ exiting $v_{k-1}$ prior to visiting $w'$.

(ii) Suppose that $w\neq v_{k}$.
Then any path of $G$ from $w$ to a vertex $u$ in a block succeeding $B_k$ has to first  
enter $v_{k}$ through one of its incoming edges that belong to $B_k$,
before going to a block $C$ that succeeds $B_k$ in the block representation.

(iii) Every simple $v_{k-1} \dsh v_k$  path in $G$ is completely contained in $B_k$.
\end{lemma}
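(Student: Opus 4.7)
All three parts rest on the observation that in the series-parallel network $G_i$, the separators $v_0 = s_i, v_1, \ldots, v_b = t_i$ are cut vertices: every simple $s_i \dsh t_i$ path in $G_i$ traverses them in order, so $v_{k-1}$ and $v_k$ are the only interface between the interior of $B_k$ and the rest of $G_i$. The plan is to prove (i) carefully, deduce (ii) by reversing edge orientations, and reduce (iii) to (ii).

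For (i), write $P = x_0, x_1, \ldots, x_q$ with $x_0 = u$ and $x_q = w$, and let $x_j = w'$ be the first internal vertex of $B_k$ on $P$. I will show $x_{j-1} = v_{k-1}$; the edge $(v_{k-1}, w')$ then lies in $E(B_k)$ because concatenating any simple $s_i \dsh v_{k-1}$ path in $G_i$ (contained in $B_1, \ldots, B_{k-1}$), the edge $(v_{k-1}, w')$, and any simple $w' \dsh t_i$ path in $G_i$ (contained in $B_k, \ldots, B_b$, and thereby disjoint from the first piece) yields a simple $s_i \dsh t_i$ path in $G$ containing this edge, placing it in $E(G_i)$ and hence in $E(B_k)$ by the block representation. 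To prove $x_{j-1} = v_{k-1}$, argue by contradiction and split on whether $x_{j-1} \in V(G_i)$. If $x_{j-1} \in V(G_i)$, apply Lemma~\ref{lm:SePaProperties1}(ii) to the edge $(x_{j-1}, w')$ to obtain a simple $s_i \dsh t_i$ path $Q$ in $G_i$ through both endpoints; $w'$ lies strictly between $v_{k-1}$ and $v_k$ on $Q$. Case-analyze the position of $x_{j-1}$: internal to a preceding block $B_l$ (with $l < k$) is ruled out by Lemma~\ref{lm:SePaProperties1}(i), and internal to a succeeding block $B_l$ or $x_{j-1} = v_l$ for $l \neq k-1$ is ruled out by combining $(x_{j-1}, w')$ with the appropriate sub-walk of $Q$ to produce either a directed cycle in $G_i$ (contradicting the DAG property of series-parallel graphs) or a simple $s_i \dsh t_i$ path in $G$ bypassing one of the separators (contradicting its role as a cut vertex in the block representation). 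If $x_{j-1} \notin V(G_i)$, concatenate a simple $s_i \dsh u$ path in $G_i$, the $u \dsh x_{j-1}$ prefix of $P$, the edge $(x_{j-1}, w')$, and a simple $w' \dsh t_i$ path in $G_i$ to form an $s_i \dsh t_i$ walk in $G$ through $x_{j-1}$; shortcutting cycles while retaining the last occurrence of $x_{j-1}$ gives a simple $s_i \dsh t_i$ path through $x_{j-1}$, contradicting $x_{j-1} \notin V(G_i)$.

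For (ii), reverse the directions of every edge of $G$; the series-parallel structure of $G_i$ is preserved with $s_i$ and $t_i$ interchanged and the separators reversed in order, so a path from $w$ to a vertex in a succeeding block of $G_i$ becomes, in the reversed graph, a path from a vertex in a preceding block to $w$, and (i) applied in the reversed graph yields (ii). For (iii), let $P'$ be a simple $v_{k-1} \dsh v_k$ path in $G$, and suppose some vertex $y$ on $P'$ lies outside $V(B_k)$; take $y$ to be the earliest such vertex, with predecessor $y^- \in V(B_k)$, so $y^- \neq v_k$ (otherwise $P'$ ends at $y^-$, not at $v_k$). Apply (ii) to the sub-path of $P'$ starting at $y^-$ and passing through $y$ until it re-enters $B_k$ at $v_k$: this forces the first edge leaving $y^-$ to enter $v_k$, contradicting that the next vertex on $P'$ is $y \neq v_k$. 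Hence every vertex of $P'$ lies in $V(B_k)$, and a concatenation argument analogous to the one used for $(v_{k-1}, w')$ in (i) shows each edge of $P'$ lies in $E(B_k)$.

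The hardest step is the sub-case $x_{j-1} \notin V(G_i)$ in (i): Lemma~\ref{lm:SePaProperties1}(ii) applies only to vertices in $V(G_i)$, so ruling out off-$G_i$ excursions of $P$ requires explicitly assembling a simple $s_i \dsh t_i$ path through $x_{j-1}$ from the available $G_i$-paths and the $P$-prefix, while ensuring the cycle-removal step does not delete $x_{j-1}$ from the resulting simple path.
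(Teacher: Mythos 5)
Your overall route differs from the paper's: the paper proves (i) and (ii) by induction on the length of the path, peeling off one edge at a time and using Lemma~\ref{lm:SePaProperties1}(i) to classify where that single edge can land, and then derives (iii) from \emph{both} (i) and (ii) together with a concatenation into a simple $s_i\dsh t_i$ path. Your non-inductive argument for (i) has two genuine gaps. First, in the sub-case $x_{j-1}\in V(G_i)$ you rule out positions of $x_{j-1}$ by producing ``a directed cycle in $G_i$''; but the edge $(x_{j-1},w')$ is only known to be an edge of $G$, and an edge of $G$ between two vertices of $G_i$ need not belong to $G_i$ (it does so only if it lies on some simple $s_i\dsh t_i$ path). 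A cycle in $G$ is no contradiction --- only $G_i$ is guaranteed acyclic --- so this case is not actually closed. Second, in the sub-case $x_{j-1}\notin V(G_i)$, extracting a simple $s_i\dsh t_i$ path from your walk that still contains $x_{j-1}$ requires the $u\dsh x_{j-1}$ prefix of $P$ to be disjoint from the chosen $s_i\dsh u$ and $w'\dsh t_i$ paths; but that prefix could a priori wander into $B_k$'s separators or into succeeding blocks, and excluding this is essentially the statement being proved, so the argument is circular as written. The paper's induction avoids both problems because at each step it only ever reasons about a single edge, to which Lemma~\ref{lm:SePaProperties1}(i) applies directly.

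The reduction of (ii) to (i) by arc reversal is fine in spirit (the paper uses the same trick elsewhere), but your proof of (iii) from (ii) alone does not go through. The earliest vertex $y$ of $P'$ outside $V(B_k)$ may lie in a \emph{preceding} block or outside $G_i$ altogether, and (ii) --- which concerns paths ending in a \emph{succeeding} block --- says nothing about either situation; moreover (ii) does not assert that the \emph{first} edge leaving $y^-$ enters $v_k$, only that $v_k$ is entered via a $B_k$-edge at some point before a succeeding block is reached. The paper handles (iii) by using (i) to exclude vertices of preceding blocks (re-entering $B_k$ would force a second visit to $v_{k-1}$, violating simplicity), (ii) to exclude vertices of succeeding blocks (leaving $B_k$ would force an early visit to $v_k$), and only then the concatenation with simple $s_i\dsh v_{k-1}$ and $v_k\dsh t_i$ paths to absorb any remaining off-$G_i$ vertices into a simple $s_i\dsh t_i$ path, which places all of $P'$ in $B_k$. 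You need all three ingredients.
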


\hide{
\begin{proof}
The proofs of (i) and (ii) are by induction on the length of path $u \dsh w$. 

For (i), if the path has length equal to $1$ then it is a simple edge, i.e.\ edge $(u,w)$, and thus $u=v_{k-1}$, 
because of Lemma~\ref{lm:SePaProperties1}(i), and then (i) holds. 

Now suppose inductively that the result holds for paths of length up to $l-1$.
Let $u \dsh w$ be a path of length $l$. 
Let $(u,x)$ be the first edge on this path.
Note that by Lemma~\ref{lm:SePaProperties1}(i),
$x$ cannot belong to any successor of $B_k$ 
(unless $u=v_{k-1}$ and $x=v_k$, in which case (i) would hold). 
If $x$ belongs to $B_k$ and is not $v_{k-1}$, then by Lemma \ref{lm:SePaProperties1}(i),
$u=v_{k-1}$ and thus (i) holds. 
If $x=v_{k-1}$ or $x$ does not belong to $B_k$
(which implies it belongs to a predecessor of $B_k$)
then the inductive hypothesis holds for the length $l-1$ path $x \dsh w$,
yielding the desired edge exiting $v_{k-1}$. 
Thus (i) also holds for path $u \dsh w$.

For (ii), if the path has length equal to $1$ then it is a single edge, i.e.\ edge $(w,u)$, 
and thus by Lemma \ref{lm:SePaProperties1}(i)$u=v_{k}$, and (ii) holds. 
%
Now suppose inductively that the result holds for paths of length up to $l-1$.
Let  $w\dsh u$ be a path of length $l$. 
Let $(w,x)$ be the first edge on this path.  
Either $x=v_k$ or $x$ is an internal vertex in $B_k$, by Lemma \ref{lm:SePaProperties1}(i). 
If $x=v_k$ then (ii) holds. 
If $x$ is an internal vertex in $B_k$, then the inductive hypothesis applies t the length $l-1$ path $x \dsh w$.
Thus t(ii) also holds for path $u\dsh w$.

(iii) follows from (i) and (ii). 
Consider an arbitrary simple $v_{k-1}\dsh v_k$ path $P$. 
Path $P$ does not contain a vertex from any preceding block, for if it did, then to re-enter $B_k$ so as to reach $w$, 
according to (i), it would go through $v_{k-1}$ again, and then it would not be a simple path.
Also, aside its endpoints, $P$ does not contain a vertex from any succeeding block, for if it did, then to leave $B_k$, 
according to (ii), it would reach $v_k$ before reaching it again at the end, and then it would not be a simple path.
Thus, a path that follows some simple $s_i \dsh v_{k-1}$ path,
then follows $P$ and then follows some simple $v_{k} \dsh t_i$ path is a simple $s_i\dsh t_i$ path.
Consequently, $P$ lies entirely in $B_k$.
\end{proof}
}

\hide{
\begin{proof}
The proofs of (i) and (ii) are by induction on the length of path $u \dsh w$. 

For (i), if the path has length equal to $1$ then it is a simple edge, i.e.\ edge $(u,w)$, and thus $u=v_{k-1}$, 
because of Lemma~\ref{lm:SePaProperties1}(i), and then (i) holds. 

Now suppose inductively that the result holds for paths of length up to $l-1$.
Let $u \dsh w$ be a path of length $l$. 
Let $(u,x)$ be the first edge on this path.
Note that by Lemma~\ref{lm:SePaProperties1}(i),
$x$ cannot belong to any successor of $B_k$ 
(unless $u=v_{k-1}$ and $x=v_k$, in which case (i) would hold). 
If $x$ belongs to $B_k$ and is not $v_{k-1}$, then by Lemma \ref{lm:SePaProperties1}(i),
$u=v_{k-1}$ and thus (i) holds. 
If $x=v_{k-1}$ or $x$ does not belong to $B_k$
(which implies it belongs to a predecessor of $B_k$)
then the inductive hypothesis holds for the length $l-1$ path $x \dsh w$,
yielding the desired edge exiting $v_{k-1}$. 
Thus (i) also holds for path $u \dsh w$.

For (ii), if the path has length equal to $1$ then it is a single edge, i.e.\ edge $(w,u)$, 
and thus by Lemma \ref{lm:SePaProperties1}(i)$u=v_{k}$, and (ii) holds. 
%
Now suppose inductively that the result holds for paths of length up to $l-1$.
Let  $w\dsh u$ be a path of length $l$. 
Let $(w,x)$ be the first edge on this path.  
Either $x=v_k$ or $x$ is an internal vertex in $B_k$, by Lemma \ref{lm:SePaProperties1}(i). 
If $x=v_k$ then (ii) holds. 
If $x$ is an internal vertex in $B_k$, then the inductive hypothesis applies t the length $l-1$ path $x \dsh w$.
Thus t(ii) also holds for path $u\dsh w$.

(iii) follows from (i) and (ii). 
Consider an arbitrary simple $v_{k-1}\dsh v_k$ path $P$. 
Path $P$ does not contain a vertex from any preceding block, for if it did, then to re-enter $B_k$ so as to reach $w$, 
according to (i), it would go through $v_{k-1}$ again, and then it would not be a simple path.
Also, aside its endpoints, $P$ does not contain a vertex from any succeeding block, for if it did, then to leave $B_k$, 
according to (ii), it would reach $v_k$ before reaching it again at the end, and then it would not be a simple path.
Thus, a path that follows some simple $s_i \dsh v_{k-1}$ path,
then follows $P$ and then follows some simple $v_{k} \dsh t_i$ path is a simple $s_i\dsh t_i$ path.
Consequently, $P$ lies entirely in $B_k$.
\end{proof}
}


\begin{theorem}\label{thm:blockMatchedNecessary}
Let $G$ be a multi-commodity network.
If diversity helps for every 
instance on $G$ with average-respecting
demand 
(i.e.\ for any heterogeneous equilibrium $g$ and any homogeneous equilibrium $f$,
$C^{ht}(g)\leq C^{hm}(f)$), 
then $G$ is a block-matching network.
\end{theorem}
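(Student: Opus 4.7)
I proceed by contrapositive: assume $G$ is not block-matching and construct an average-respecting instance on $G$ for which $C^{ht}(g)>C^{hm}(f)$. There are two cases to consider, and they require very different amounts of work.

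\emph{Case 1: some $G_i$ is not series-parallel.} I apply Theorem~\ref{thm:SePaNecessary} to $G_i$ viewed as a single-commodity network, obtaining a strictly heterogeneous demand and edge functions on $G_i$ for which diversity hurts. I lift this instance to $G$ by giving every edge outside $G_i$ a prohibitively large constant latency (with zero deviation) so that no equilibrium flow touches those edges, and by assigning every other commodity a negligible demand at the same average diversity parameter, routed along designated auxiliary simple paths. Both equilibria then reduce to those of the single-commodity instance, preserving $C^{ht}(g)>C^{hm}(f)$.

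\emph{Case 2: every $G_i$ is series-parallel, but two of them $G_i, G_j$ have blocks $B$ and $D$ with $E(B)\cap E(D)\neq\emptyset$ and $E(B)\neq E(D)$.} My plan is to mimic the 2-commodity, 3-path instance of Proposition~\ref{prop:JustTwoPaths} inside $G$, as invited by Remark~\ref{rem:JustTwoPaths}. WLOG there is an edge $e_1\in E(B)\setminus E(D)$; pick any $e_2\in E(B)\cap E(D)$. Since $B$ contains two distinct edges, it is a parallel composition in $G_i$'s decomposition, and by descending recursively into its series-parallel structure I can extract two internally edge-disjoint $u_B\dsh v_B$ paths inside $B$, one through $e_1$ and one through $e_2$ (whenever the two edges sit in the same parallel branch, recurse into that branch). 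I extend these to simple $s_i\dsh t_i$ paths $P_1$ and $P_3$ by concatenating arbitrary fixed simple paths through the other blocks of $G_i$, and I let $P_2$ be any simple $s_j\dsh t_j$ path of $G_j$ through $e_2$. I then copy the cost functions of Proposition~\ref{prop:JustTwoPaths} onto $e_1$ and $e_2$, give all other edges of $P_1\cup P_2\cup P_3$ zero cost and deviation, and assign every remaining edge of $G$ a prohibitively large constant latency $M$ with zero deviation. I match the proposition's demand pattern: commodity $i$ heterogeneous with $\tfrac{3}{4}$ of flow at $r=0$ and $\tfrac{1}{4}$ at $r=4$ (average $\bar r=1$), commodity $j$ homogeneous with $r=1$, and every other commodity given negligible demand at $r=1$ routed along designated zero-cost auxiliary paths, so the instance remains average-respecting. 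With the flow of commodities $i$ and $j$ confined to $P_1\cup P_3$ and $P_2$, the equilibrium analysis becomes line-for-line that of Proposition~\ref{prop:JustTwoPaths}.

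The hardest step is verifying that the $M$-latency edges really do confine the equilibrium flow of commodities $i$ and $j$ to $P_1\cup P_3$ and $P_2$ respectively. For commodity $i$ this is handled block by block using Lemma~\ref{lm:SePaProperties2}(iii): every simple $v_{k-1}\dsh v_k$ path lies entirely within block $B_k$, so blocking all non-chosen alternatives within each block suffices, and inside $B$ itself the $M$-edges kill every alternative to the two chosen $u_B\dsh v_B$ paths. Commodity $j$ is more delicate because $G_j$ can interact with $B$ in subtle ways: the terminals $u_D, v_D$ of $D$ need not coincide with $u_B, v_B$, the shared edges $E(B)\cap E(D)$ may be distributed across several parallel branches of $B$, and a vertex internal to $B$ in $G_i$ may lie on a path of $G_j$ that enters or leaves $B$. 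I therefore expect the main technical work to be a case analysis on the relative positions of $B$'s and $D$'s terminals, and on the distribution of $E(B)\cap E(D)$ across $B$'s parallel branches, so as to choose $P_2$ and place the $M$-edges in a way that prevents any commodity-$j$ flow from short-cutting through $e_1$'s branch of $B$ or through a non-$P_2$ portion of $D$. This is exactly the ``subtle overlap of commodities' subnetworks'' flagged in the introduction, and it is the only place where the argument goes beyond the proof of Proposition~\ref{prop:JustTwoPaths}.
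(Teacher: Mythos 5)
Your Case 1 matches the paper, and your Case 2 correctly identifies the target (embedding the instance of Proposition~\ref{prop:JustTwoPaths}) and correctly locates the difficulty, but it does not resolve it, and the unresolved step is the entire content of the theorem. Concretely: you take $P_2$ to be \emph{any} simple $s_j\dsh t_j$ path through $e_2$, give zero cost to all non-$e_1,e_2$ edges of $P_1\cup P_2\cup P_3$, and cost $M$ to everything else. This fails in general, because a commodity-$j$ path may enter $B$ at an internal vertex, touch vertices of $P_1$ or $P_3$, and then splice together zero-cost edges of $P_1\cup P_3$ with the tail of $P_2$ to reach $t_j$ while bypassing $e_2$ entirely (the paper exhibits exactly this failure in Figure~\ref{fig:Newexample}, where $(s_2,w,z,t_2)$ is a zero-cost escape). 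Once commodity $j$ is not forced through $e_2$, the cost comparison with Proposition~\ref{prop:JustTwoPaths} collapses. Your closing sentence --- ``I therefore expect the main technical work to be a case analysis \ldots'' --- is an accurate description of what remains to be done, but it is a statement of intent, not a proof. There is also a secondary issue: your recursive extraction of two internally edge-disjoint $u_B\dsh v_B$ paths separating $e_1$ from $e_2$ need not exist for your chosen $e_1\in E(B)\setminus E(D)$ (e.g.\ if $e_1$ and $e_2$ lie in series inside the same parallel branch of $B$); the paper instead frees itself from the constraint $e_1\notin E(D)$ and simply picks $e_1$ in the parallel branch $H_2$ not containing $e_2$.

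It is worth noting that the paper's argument also has a different logical architecture from yours, which is what lets it avoid constructing a single harmful instance directly from the partial-overlap hypothesis. Under the standing assumption that diversity always helps, it first proves two structural facts about how $G_j$'s paths may meet $B=$ parallel$(H_1,H_2)$: every simple $s_j\dsh t_j$ path sharing an edge with $B$ must have its first $B$-edge depart from the terminal $u$ (Proposition~\ref{prop:hitEdgeu}), and must reach $u$ before touching any internal vertex of $B$ (Proposition~\ref{prop:hitVertexu}). The harmful instances appear only inside the proofs of these propositions, where the specific way a path misbehaves (entering $B$ at an internal vertex $w$ of $H_1$, say) supplies the extra structure needed to place the blocking costs $N$ and $M$ correctly --- including the careful redefinition of $P_2$ via the first vertex $x$ on $P$ admitting an escape edge $(x,y)$ in $B$, and the ``Property $X$'' argument that keeps $P_2$ off the edges between $u$ and $w$. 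With both propositions in hand, the conclusion is purely structural: the path $Q$ decomposes as $Q_1$ ($s_j\dsh u$, avoiding $B$'s interior), an arbitrary $u\dsh v$ path inside $B$, and $Q_3$ ($v\dsh t_j$), so all of $E(B)$ lies in $G_j$ and hence in a single block of $G_j$, forcing $E(B)=E(D)$ and contradicting the partial overlap. You would need to supply the analogues of these two propositions (or an equivalent confinement argument) before your Case 2 can be considered complete.
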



\begin{proof}
Let $G$ have $k$ commodities. 
First, we note that for any $i\in [k]$, $G_i$ is a series-parallel network. 
{Otherwise,} by Proposition \ref{prop:BraMayHurt}, 
there is some heterogenous players' demand for Commodity $i$ and edge functions for $G_i$ such that diversity hurts. 
By letting all other commodities have zero demand we obtain an instance on $G$ for which diversity hurts, a contradiction.

{To} prove that $G$ is block-matching, it remains to show that for any two commodities $i$ and $j$ of $G$, 
for any block $B$ of $G_i$ and any block $D$ of $G_j$, either $E(B)=E(D)$ or $E(B)\cap E(D)=\emptyset$. 
To reach a contradiction we assume otherwise, i.e.\ WLOG we assume that for Commodities $1$ and $2$ 
there exist two blocks $B$ of $G_1$ and $D$ of $G_2$ that share some common edge, and at the same time,
WLOG, there is an edge in $B$ {that} is not in $D$. 
The latter implies that $B$ is not a single edge, and thus it must be a parallel combination of two series-parallel networks.


Let $u$ and $v$ be the endpoints of $B$. 
We first prove that all simple $s_2 \dsh t_2$ paths of $G_2$ that share an edge with $B$
first traverse an edge starting at $u$ before traversing any other edge of $B$ (Proposition~\ref{prop:hitEdgeu}). 
Then we prove that all $s_2 \dsh t_2$ simple paths of $G_2$, that share an edge with $B$, 
reach $u$ before traversing any internal vertex of $B$ (Proposition~\ref{prop:hitVertexu}). 
Since $E(B)\cap E(D)\neq \emptyset$, there is a simple $s_2 \dsh t_2$  path of $G_2$ that shares an edge with $B$. Proposition~\ref{prop:hitVertexu} implies that this path, $Q$, 
has a subpath consisting of a simple $s_2 \dsh u$ path $Q_1$  that shares no internal vertex with $B$. 
A completely symmetric argument shows that $Q$ has a subpath consisting of a simple $v \dsh t_2$ path $Q_3$ 
that shares no internal vertex with $B$.\footnote{For the symmetric argument, 
simply reverse all the arcs and the directions of the demand.} 
But then, for any simple $u \dsh v$ path $Q_2$ inside $B$, 
the path $Q'=Q_1, Q_2, Q_3$ is a simple $s_2\dsh t_2$ path, and thus it belongs to $G_2$. 
But this implies that all the edges of $B$ belong to $G_2$ and
 %
because $B$ is a block, these edges will all be in a single block of $G_2$. 
This block must be block $D$, since by assumption $E(B)\cap E(D)\neq \emptyset$,
contradicting the existence of an edge in $B$ and not in $D$.  
Therefore, once these propositions are proved, the theorem will {follow}.

The proofs of these propositions rely on the same idea. 
For each proposition, assuming that it does not hold, we construct instances, i.e.\ 
we choose demand and edge functions for $G$, such that diversity hurts, 
contradicting the assumption that for any instance on $G$ diversity helps. 
The construction of the contradicting instances 
is based on Remark~\ref{rem:JustTwoPaths}, which 
follows Proposition~\ref{prop:JustTwoPaths}.

\begin{proposition}\label{prop:hitEdgeu}
Let $P$ be a simple $s_2 \dsh t_2$ 
path in $G_2$ which 
shares an edge with $B$.
The first edge on $P$ in $B$ departs from $u$, i.e.\ has the form $(u,x)$ for some $x$ in $B$.
\end{proposition}

\begin{proof}
Let $B$ be the parallel combination of $H_1$ and $H_2$.
WLOG we may assume that $P$ only visits vertices of $G_1$, plus $s_2$ and $t_2$, 
as we may treat subpaths of $P$ that have vertices that lie outside $G_1$ as simple edges.
Let $w$ be the first internal vertex of $P$ that belongs to $B$, and WLOG suppose that $w$ lies in $H_1$.  
By Lemma~\ref{lm:SePaProperties1}(ii), the edge of $P$ exiting $w$ will either go toward $t_1$, i.e.\ forward, 
and thus traverse an edge of $B$ for the first time (recall also Lemma~\ref{lm:SePaProperties2}(ii)),
or will go toward $s_1$, i.e.\ backward, 
either staying in $H_1$ or going back to one of the preceding blocks of $B$.
If it goes to one of the preceding blocks of $B$, then by Lemma~\ref{lm:SePaProperties2}(i),
 it has to traverse an edge of $B$ departing from $u$ in order to re-enter the internal portion of $B$ 
(recall that $P$ has some edge in $B$) and then the proposition would hold. 
The remaining possibility is that the backward edge leads to another internal vertex of $H_1$.
However, we can only repeat this process finitely often
so if the proposition does not hold,
it must be that $P$ eventually 
traverses a first edge in $B$ that departs from an internal vertex of $H_1$. 
In this case
we will reach a contradiction by creating an instance where diversity hurts. 
This instance will be based on the instance of Proposition~\ref{prop:JustTwoPaths}. 

\begin{figure}[t]\center
\includegraphics[scale=0.44]{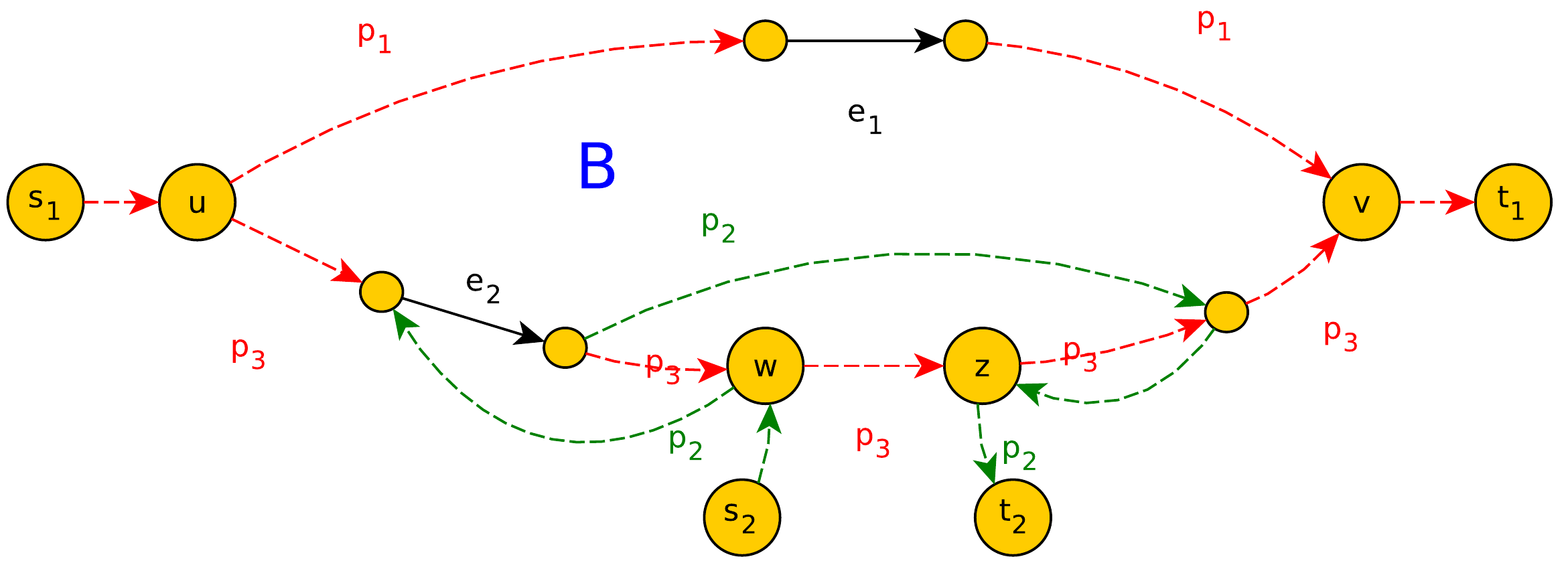}
\caption{Illustrating why $P_2 \ne P$ in general in Proposition~\ref{prop:hitEdgeu}}
\label{fig:Newexample}
\end{figure}

We would like to use the following construction at this point.
Let $P_2$ be the path $P$ resulting from the discussion in the previous paragraph
and let $e_2$ be the first edge on $P$ that lies in $B$.
Then let $P_3$ be an $s_1\dsh t_1$ path through $e_2$.
Recall that $e_2$ lies in $H_1$.
Now let $P_1$ be an $s_1\dsh t_1$ path that goes through $H_2$ and let $e_1$
be an arbitrary edge on $P_1$ in $B$.
The intention is to force the $s_1\dsh t_1$ flow to use just paths $P_1$ and $P_3$,
while the $s_2\dsh t_2$ flow uses just path $P_2$, {at the same time} ensuring that diversity is harmful
as in Proposition~\ref{prop:JustTwoPaths}.
{Consider the following edge functions}.
$e_1$ and $e_2$ receive the same edge functions as in Proposition~\ref{prop:JustTwoPaths}.
The other edges all receive a 0 \stdev.
For their latency functions, edges on $P_1$ and $P_3$ that are in $B$ receive 0 functions.
Outedges from $P_1$ and $P_3$ that lie in $B$ all receive functions of constant value
$N$ even if they are on $P_2$.
All as yet unassigned edges on $P_2$ receive 0 functions, and the remaining edges are all given functions
of constant value $M \gg N$.
However, the example in Figure~\ref{fig:Newexample} shows {that} there is a zero cost $s_2\dsh t_2$
path ($s_2,w,z,t_2$), which defeats the construction.

We fix this problem by defining the path $P_2$ as follows.
Let $x$ be the first vertex on path $P$ (in the example, this is $w$) such that there is an edge $(x,y)$ in $B$ and such that there
is a $y\dsh t_2$ path $P_{y\dsh t_2}$ which does not go through any ealier vertex on $P$ (i.e.\ any vertex from $s_2$ to $x$
inclusive). Then $(x,y)$ is chosen to be $e_2$, and $P_2$ is defined to be the simple path comprising
the initial portion of $P$ up to $x$, followed by $e_2$, followed by $P_{y\dsh t_2}$ (it may be that $P_2 = P$).
Now the above cost functions, modulo a few details, will achieve the desired contradiction.
These details follow. 


Let path $P_2$ be the simple $s_2\dsh t_2$ path that follows $P$ up to $w$ 
and keeps following it after $w$ until for the first time it finds an edge of $B$ (and hence of $H_1$) 
that departs from a vertex of $P$ and can lead to $t_2$ via a path $P'$
without returning to any of the previously visited vertices. 
Choose $e_2$ to be this edge and have $P_2$ follow $e_2$ and then path $P'$ to $t_2$. 
Note that $P$ itself can be $P_2$. 
Consider an arbitrary $s_1\dsh t_1$ simple path $P_3$ that goes through $e_2$ (and thus through $H_1$) 
and an arbitrary $s_1 \dsh t_1$ simple path $P_1$ that goes through $H_2$,
and let $e_1$ be any edge of $P_1$ in $H_2$. 
These paths and edges are going to mimic the ones of Proposition~\ref{prop:JustTwoPaths}. 
By using some large values $N$ and $M$ as latency functions for some of the edges, at both equilibria, 
we will ensure that all the flow of $G_1$ goes through the subpaths of $P_1$ and $P_3$ in $B$,
and all the flow of $G_2$ goes through the subpath of $P_2$ that starts at $s_2$ and ends with $e_2$.

\begin{figure}[t]\center
\includegraphics[scale=0.458]{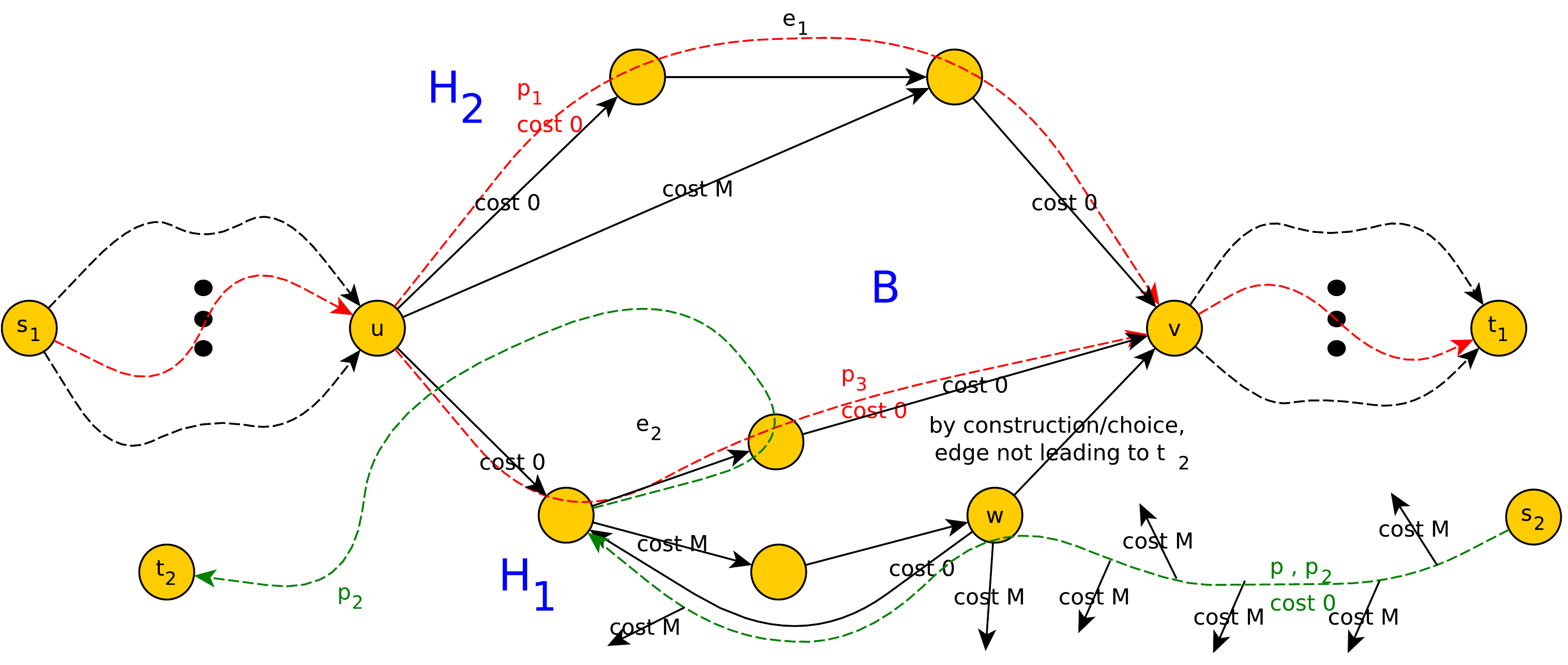}
\caption {Sample network for Proposition \ref{prop:hitEdgeu}\label{fig:prop1}}
\end{figure}

Let  $d_1=d_2=1$ be the total demands for Commodities $1$ and $2$ respectively,
 and let all other commodities have $0$ demand. 
Let $G_1$'s demand consist of $\tfrac 34$ \ntrl{ }players and $\tfrac 14$ players 
with \divpar{ }equal to $4$, and let the demand of $G_2$ consist of players 
with \divpar{ }equal to $1$.  
Assign edge $e_1$ the constant latency function $\ell_1(x)=1$
and the constant \stdev{ }function $\std_1(x)=2$. 
Assign edge $e_2$ the constant \stdev{ }$\std_2=0$, and as latency function any $\ell_2$ 
that is continuous and strictly increasing with $\ell_2(1)=3$ and $\ell_2(\frac{5}{4})=9$. 
Assign all other edges the constant \stdev{ }$\std_2=0$.
Assign to all other edges of $P_1$ and $P_3$ that lie inside $B$ 
$0$ latency functions. 
To all edges that depart from a vertex of $P_1$ or $P_3$  and that lie on $P_2$, 
assign latency functions equal to some big constant $N$, say $N=24$
(i.e.\ double the heterogeneous equilibrium cost of Proposition~\ref{prop:JustTwoPaths}). 
For all remaining edges on $P_2$, assign $0$ latency functions. 
Finally, to all remaining edges, assign constant latency functions equal to $M$,
where $M$ is defined to be $2|V(G)|\cdot N$.

\hide{
Starting from $s_2$, for all edges of $P_2$ up to but not including  $e_2$,  
assign latency and standard deviation functions equal to $0$. 
Starting from $s_2$ and  following $P_2$ up to the starting vertex of $e_2$, 
for all edges that depart from a vertex of $P_2$ but have not yet been assigned edge functions, 
In addition, to the edges departing from the starting vertex of $e_2$ 
overwrite and assign latency and standard deviation functions equal to $M$. 
To all other edges assign $0$ latency and standard deviation functions. 
Note that by the choice of the paths the above assignment assigns only one latency and standard deviation 
to each edge as $P_2$ shares no edge with $B$ before it reaches $e_2$ .
}

Note that all edges other than $e_2$ have constant edge functions. 
Thus for Commodity $1$ under both equilibria there will be a common cost $C^{B^-}$ 
that will be paid on blocks other than $B$. 
Futhermore, for Commodity $2$, any path that costs less than $M$ will follow path $P_2$ up to and including $e_2$,
for any edge leaving path $P_2$ either has cost $M$ or is an edge of $P_3$ and by construction
these edges do not create a cycle-free path to $t_2$.
Thus if any of the homogeneous or heterogeneous equilibria is to cost less than $M$
 then all the flow of Commodity $2$ will go through $e_2$,
 and from there follow a least cost path to $t_2$ 
(recall all edges other than $e_2$ have constant edge functions)
with cost $C^{e_2^-}$ say.
Note that $C^{e_2^-}$ will not be more than the cost of path $P_2$ following edge $e_2$,
which is bounded by
$|V(G)|N=\frac{M}{2}$,
and thus the path portion with cost  $C^{e_2^-}$ is preferable to any path with an edge of cost $M$.
\footnote{\label{footnt:EdgesUpperCost} For edge $e_2$ this will hold at both equilibria
but in any case we can define its latency functions so that this holds in general, e.g.\ also set $\ell_2(2)=N$.}

For the heterogeneous equilibrium $g$ of this instance, for Commodity $1$ 
route all the flow through the shortest $s_1\dsh u$ and $v\dsh t_1$ paths,
and inside $B$ route $\tfrac 34$ units of flow through $P_1$ and $\tfrac 14$ units of flow through $P_2$,
and route all the flow of commodity $2$ through $e_2$,
via $P_2$ up to $e_2$, and after $e_2$ 
via a least cost path to $t_2$. 
The \ntrl{ }players of $G_1$ compute a cost for $P_1$ equal to $1+C^{B^-}$ 
and a cost for $P_3$ equal to $9+C^{B^-}$,
 and thus prefer $P_1$ to $P_3$,
 while the remaining players of $G_1$ compute a cost equal to $9+C^{B^-}$ for both $P_1$ and $P_3$,
 and thus prefer $P_3$ to $P_1$ (recall that $\ell_2$ is strictly increasing). 
The other paths have cost at least $N+C^{B^-}=24+C^{B^-}$ and thus are not preferred by any type  of players. 
The players of Commodity $2$ pay cost equal to $9+C^{e_2^-}$ (the cost of $e_2$ plus the cost after it) 
and thus prefer staying on $e_2$ rather than paying at least $M$ to avoid it (recall $C^{e_2^-}\leq \frac{M}{2}$). 
Note that on $P_2$ for the vertices before $e_2$ there might be edges leaving $P_2$ that cost $0$ 
(these are edges of $P_3$) 
but by the definition of $P_2$ they cannot lead to $t_2$ without visiting preceding vertices. 
Putting it all together, this routing is indeed the heterogeneous equilibrium with cost  
$C^{ht}(g)=1\cdot \tfrac 34 d_1+ 9\cdot \tfrac 14 d_1+d_1C^{B^-}+9d_2+d_2C^{e_2^-}=12+C^{B^-}+C^{e_2^-}$.

For the homogeneous equilibrium $f$ of this instance, route all the demand of $G_1$ through  $e_1$, 
via $P_1$ and the shortest $s_1\dsh u$ and $v \dsh t_1$ paths, 
and route all the flow of Commodity $2$ through $e_2$, 
via $P_2$ up to $e_2$ and from there via a least cost path to $t_2$. 
The average \divpar{ }for Commodity $1$'s demand equals $1$,
 and thus $P_1$ and $P_3$ are both computed to cost $3+C^{B^-}$ while all other paths cost at least $N+C^{B^-}=24+C^{B^-}$ and thus are avoided. 
In the same way as above, the Commodity $2$ players pay cost equal to $3+C^{e_2^-}$ 
(the cost of $e_2$ plus the cost of the path portion following $e_2$) 
and thus prefer staying on $e_2$ rather than  paying at least $M$ to avoid it
(recall $C^{e_2^-}\leq \frac{M}{2}$). 
Thus the cost $C^{hm}(f)$ of the homogeneous equilibrium is 
$C^{hm}(f)=3\cdot d_1+d_1\cdot  C^{B^-}+3\cdot d_2+d_2\cdot C^{e_2^-}=6+C^{B^-}+C^{e_2^-}$. 
Consequently, $C^{ht}(g)>C^{hm}(f)$, as needed.
 
The above instance contradicts the assumption that $G$ has the property
that diversity helps for all 
edge functions. 
Thus the hypothesis that $P$, after reaching $w$ 
(and possibly moving backward while staying in the internal portion of $H_1$) 
first traverses an edge of $H_1$ (departing from an internal vertex) does not hold. 
This was what needed for the proposition to hold. 
\end{proof}

\hide{
\begin{proof}
Let $B$ be the parallel combination of $H_1$ and $H_2$.
WLOG we may assume that $P$ only visits vertices of $G_1$, plus $s_2$ and $t_2$, 
as we may treat subpaths of $P$ that have vertices that lie outside $G_1$ as simple edges.
Let $w$ be the first internal vertex of $P$ that belongs to $B$, and WLOG suppose that $w$ lies in $H_1$.  
By Lemma~\ref{lm:SePaProperties1}(ii), the edge of $P$ exiting $w$ will either go toward $t_1$, i.e.\ forward, 
and thus traverse an edge of $B$ for the first time (recall also Lemma~\ref{lm:SePaProperties2}(ii)),
or will go toward $s_1$, i.e.\ backward, 
either staying in $H_1$ or going back to one of the preceding blocks of $B$.
If it goes to one of the preceding blocks of $B$, then by Lemma~\ref{lm:SePaProperties2}(i),
 it has to traverse an edge of $B$ departing from $u$ in order to reenter the internal portion of $B$ 
(recall that $P$ has some edge in $B$) and then the proposition would hold. 
The remaining possibility is that the backward edge leads to another internal vertex of $H_1$.
However, we can only repeat this process finitely often
so if the proposition does not hold,
it must be that $P$ eventually 
traverses a first edge in $B$ that departs from an internal vertex of $H_1$. 
In this case
we will reach a contradiction by creating an instance where diversity hurts. 
This instance will be based on the instance of Proposition~\ref{prop:JustTwoPaths}. 

We would like to use the following construction at this point.
Let $P_2$ be the path $P$ resulting from the discussion in the previous paragraph
and let $e_2$ be the first edge on $P$ that lies in $B$.
Then let $P_3$ be an $s_1\dsh t_1$ path through $e_2$.
Recall that $e_2$ lies in $H_1$.
Now let $P_1$ be an $s_1\dsh t_1$ path that goes through $H_2$ and let $e_1$
be an arbitrary edge on $P_1$ in $B$.
The intention is to force the $s_1\dsh t_1$ flow to use just paths $P_1$ and $P_3$,
while the $s_2\dsh t_2$ flow uses just path $P_2$, {at the same time} ensuring that diversity is harmful
as in Proposition~\ref{prop:JustTwoPaths}.
{Consider the following edge functions}.
$e_1$ and $e_2$ receive the same edge functions as in Proposition~\ref{prop:JustTwoPaths}.
The other edges all receive a 0 standard deviation.
For their latency functions, edges on $P_1$ and $P_3$ that are in $B$ receive 0 functions.
Outedges from $P_1$ and $P_3$ that lie in $B$ all receive functions of constant value
$N$ even if they are on $P_2$.
All as yet unassigned edges on $P_2$ receive 0 functions, and the remaining edges are all given functions
of constant value $M \gg N$.
However, the example in Figure~\ref{fig:Newexample} shows {that} there is a zero cost $s_2\dsh t_2$
path ($s_2,w,z,t_2$), which defeats the construction.

\begin{figure}\center
\includegraphics[scale=0.53]{Figures/NewExample.pdf}
\caption{Illustrating why $P_2 \ne P$ in general in Proposition~\ref{prop:hitEdgeu}}
\label{fig:Newexample}
\end{figure}

We fix this problem by defining the path $P_2$ as follows.
Let $x$ be the first vertex on path $P$ (in the example, this is $w$) such that there is an edge $(x,y)$ in $B$ and such that there
is a $y\dsh t_2$ path $P_{y\dsh t_2}$ which does not go through any ealier vertex on $P$ (i.e.\ any vertex from $s_2$ to $x$
inclusive). Then $(x,y)$ is chosen to be $e_2$, and $P_2$ is defined to be the simple path comprising
the initial portion of $P$ up to $x$, followed by $e_2$, followed by $P_{y\dsh t_2}$ (it may be that $P_2 = P$).
Now the above cost functions, modulo a few details, will achieve the desired contradiction.
The details can be found in the {Appendix.}
\end{proof}
}

\hide{
\begin{proof}
WLOG we may assume that $P$ only visits vertices of $G_1$ and $s_2, t_2$, 
as we may handle subpaths of $P$ that have vertices that lie outside $G_1$ as simple edges.
Let $B$ be the parallel combination of $H_1$ and $H_2$.
Let $w$ be the first internal vertex of $P$ that belongs to $B$, and WLOG suppose that $w$ lies in $H_1$.  
By Lemma~\ref{lm:SePaProperties1}(ii), the edge of $P$ exiting $w$ will either go towards $t_i$, i.e.\ forward, 
and thus traverse an edge of $B$ for the first time (recall also Lemma~\ref{lm:SePaProperties2}(ii)),
or will go towards $s_i$, i.e.\ backwards, either staying in $H_1$ or going back to one of the preceding blocks of $B$.
If it goes to one of the preceding blocks of $B$, then by Lemma~\ref{lm:SePaProperties2}(i),
 it has to traverse an edge of $B$ departing from $u$ in order to enter the internal portion of $B$ 
(recall that $P$ has some edge in $B$) and then the proposition would hold. 
Applying the same argument inductively, we find that for the proposition not to hold,
after hitting $w$, $P$ may move backward, staying in the internal portion of $H_1$,
until it first traverses an edge of $H_1$ (departing from an internal vertex). 
To reach a contradiction we assume that this is the case.  
We will reach a contradiction by creating an instance where diversity hurts. 
This instance will be based on the instance of Proposition~\ref{prop:JustTwoPaths}. 
See Figure \ref{fig:prop1}.

Let path $P_2$ be the simple $s_2\dsh t_2$ path that follows $P$ up to $w$ 
and keeps following it after $w$ until for the first time it finds an edge of $B$ (and hence of $H_1$) 
that departs from a vertex of $P$ and can lead to $t_2$ without returning to any of the previously visited vertices. 
Note that $P$ itself can be $P_2$. Choose $e_2$ to be this edge. 
Consider an arbitrary $s_1\dsh t_1$ simple path $P_3$ that goes through $e_2$ (and thus through $H_1$) 
and an arbitrary $s_1 \dsh t_1$ simple path $P_1$ that goes through $H_2$,
and let $e_1$ be any edge of $P_1$ in $H_2$. 
These paths and edges are going to mimic the ones of Proposition~\ref{prop:JustTwoPaths}. 
By using some large values $N$ and $M$ as latency functions for some of the edges, at both equilibria, 
we will ensure that all the flow of $G_1$ goes through the subpaths of $P_1$ and $P_3$ in $B$,
and all the flow of $G_2$ goes through the subpath of $P_2$ that starts at $s_2$ and ends with $e_2$.

\begin{figure}\center
\includegraphics[scale=0.44]{Figures/proposition1.pdf}
\caption {Sample network for Proposition \ref{prop:hitEdgeu}\label{fig:prop1}}
\end{figure}

Let  $d_1=d_2=1$ be the total demands for Commodities $1$ and $2$ respectively,
 and let all other commodities have $0$ demand. 
Let $G_1$'s demand consist of $\tfrac 34$ risk-neutral players and $\tfrac 14$ risk-averse players 
with aversion equal to $4$, and let the demand of $G_2$ be risk-neutral\footnote{
\label{fnt:strictHetero}\thl{For simplicity, this violates the demand being strictly heterogeneous but, as it will be clear from the proof, the result is the same if we have the risk-averse factors inside  $[0,\epsilon]$, for a small enough $\epsilon$.  }}.  
Assign edge $e_1$ the constant latency function $\ell_1(x)=1$
and the constant standard deviation function $\std_1(x)=2$. 
Assign edge $e_2$ the constant standard deviation $\std_2=0$, and as latency function any $\ell_2$ 
that is continuous and strictly increasing with $\ell_2(1)=3$ and $\ell_2(\frac{5}{4})=9$. 
To all other edges of $P_1$ and $P_3$ that lie inside $B$ assign $0$ latency and standard deviation functions. 
To all edges that depart from a vertex of $P_1$ or $P_3$ that lies in $B$, apart from $v$, 
assign latency and standard deviation functions equal to some big constant $N$, say $N=24$
(i.e.\ double the heterogeneous equilibrium cost of Proposition~\ref{prop:JustTwoPaths}). 
Starting from $s_2$, for all edges of $P_2$ up to but not including  $e_2$,  
assign latency and standard deviation functions equal to $0$. 
Starting from $s_2$ and  following $P_2$ up to the starting vertex of $e_2$, 
for all edges that depart from a vertex of $P_2$ but have not yet been assigned edge functions, 
In addition, to the edges departing from the starting vertex of $e_2$ 
overwrite and assign latency and standard deviation functions equal to $M$. 
To all other edges assign $0$ latency and standard deviation functions. 
Note that by the choice of the paths the above assignment assigns only one latency and standard deviation 
to each edge as $P_2$ shares no edge with $B$ before it reaches $e_2$ .

Note that all edges other than $e_2$ have constant edge functions. 
Thus for Commodity $1$ under both equilibria there will be a common cost $C^{B^-}$ 
that will be paid on blocks other than $B$ (actually on blocks succeeding $B$). 
Also, for Commodity $2$, any path that costs less than $M$ will follow path $P_2$ up to and including $e_2$. 
Thus if any of the homogeneous or heterogeneous equilibria is to cost less than $M$
 then all the flow of Commodity $2$ will go through $e_2$,
 and from there follow a shortest path to $t_2$ 
(recall all other edges have constant edge functions)
with cost $C^{e_2^-}$ say.
 Note that $C^{e_2^-}$ cannot be more than $|E(G)|N=\frac{M}{2}$ 
as edges with cost $M$ will not be traversed (by the simplicity of $P_2$) 
and the maximum cost among all other edges is 
at most $N$.\footnote{\label{footnt:EdgesUpperCost} For edge $e_2$ this will hold at both equilibria
but in any case we can define its latency functions so that this holds in general, e.g\. also set $\ell_2(2)=N$.}

For the heterogeneous equilibrium $g$ of this instance, for Commodity $1$ 
route all the flow through the shortest $s_1\dsh u$ and $v\dsh t_1$ paths,
and inside $B$ route $\tfrac 34$ units of flow through $P_1$ and $\tfrac 14$ units of flow through $P_2$,
and route all the flow of Commodity $2$ through $e_2$,
via $P_2$ up to $e_2$, and from there via the shortest path to $t_2$. 
The risk-neutral players of $G_1$ compute a cost for $_1$ equal to $1+C^{B^-}$ 
and a cost for $P_3$ equal to $9+C^{B^-}$,
 and thus prefer $)P_1$ to $P_3$,
 while the risk-averse players of $G_1$ compute a cost equal to $9+C^{B^-}$ for both $P_1$ and $P_3$,
 and thus prefer $P_3$ to $P_1$ (recall that $\ell_2$ is strictly increasing). 
The other paths have cost at least $N+C^{B^-}=24+C^{B^-}$ and thus not preferred by any type  of players. 
The Commodity $2$ players pay cost equal to $9+C^{e_2^-}$ (the cost of $e_2$ plus the cost after it) 
and thus prefer staying on it rather than paying at least $M$ to avoid $e_2$ (recall $C^{e_2^-}\leq \frac{M}{2}$). 
Here recall that on $P_2$ for the vertices before $e_2$ there might be edges leaving $P_2$ that cost $0$ or $N$ 
(these are edges of $P_3$ or edges departing from a vertex of $P_3$) 
but by the choice of $P_2$ they cannot lead to $t_2$ without visiting preceding vertices. 
Putting it all together this routing is indeed the heterogeneous equilibrium with cost  
$C^{ht}(g)=1\cdot \tfrac 34 d_1+ 9\cdot \tfrac 14 d_1+d_1C^{B^-}+9d_2+d_2C^{e_2^-}=12+C^{B^-}+C^{e_2^-}$.

For the homogeneous equilibrium $f$ of this instance, route all the demand of $G_1$ through  $e_1$, 
via $P_1$ and the shortest $s_1\dsh u$ and $v \dsh t_1$ paths, 
and route all the flow of Commodity $2$ through $e_2$, 
via $P_2$ up to $e_2$ and from there via the shortest path to $t_2$. 
The average aversion factor of demand of Commodity $1$ equals $1$,
 and thus $P_1$ and $P_3$ are both computed to cost $3+C^{B^-}$ while all other paths cost at least $N+C^{B^-}=24+C^{B^-}$ and thus are avoided. 
In the same way as above, the players of Commodity $2$ pay cost equal to $3+C^{e_2^-}$ 
(the cost of $e_2$ plus the cost of the path portion following $e_2$) 
and thus prefer staying on $e_2$ rather than  paying at least $M$ to avoid it
(recall $C^{e_2^-}\leq \frac{M}{2}$). 
Thus the cost $C^{hm}(f)$ of the homogeneous equilibrium is 
$C^{hm}(f)=3\cdot d_1+d_1\cdot  C^{B^-}+3\cdot d_2+d_2\cdot C^{e_2^-}=6+C^{B^-}+C^{e_2^-}$. 
Consequently, $C^{ht}(g)>C^{hm}(f)$, as needed.
 
The above instance contradicts the assumption that $G$ has the property
that for any edge functions diversity helps. 
Thus the hypothesis that $P$, after reaching $w$ 
(and possibly moving backward while staying in the internal portion of $H_1$) 
first traverses an edge of $H_1$ (departing from an internal vertex) does not hold. 
This was what needed for the proposition to hold. 
\end{proof} 
}
\begin{proposition}\label{prop:hitVertexu}
All simple $s_2\dsh t_2$ 
paths of $G_2$ that share an edge with $B$ reach $u$ before 
any internal vertex of $B$.
\end{proposition}
\begin{proof}
Consider an arbitrary simple $s_2\dsh t_2$ path $P$ that shares some edge with $B$. 
WLOG we may assume, again, that $P$ only visits vertices of $G_1$ and $s_2$, $t_2$, 
as we may handle subpaths of $P$ that have vertices that lie outside $G_1$ as edges. 
Let $B$ be the parallel combination of $H_1$ and $H_2$ and, assuming that the proposition does not hold, 
let $w$ be the first vertex of $P$ (before it reaches $u$) that belongs to the internal portion of $B$, 
and suppose WLOG that $w$ lies in $H_1$. 
By Lemma~\ref{lm:SePaProperties1}(ii) and Proposition \ref{prop:hitEdgeu}, the edge of $P$ exiting $w$ cannot go toward $t_1$, i.e.\ forward, 
as then it would traverse an edge of $B$ for the first time  that does not depart from $u$ (recall also Lemma~\ref{lm:SePaProperties2}(ii)).
Thus, it has to go toward $s_1$, i.e.\ backward, 
either staying in $H_1$ or going back to one of the preceding blocks of $B$. If by going backward it stays in $H_1$ then by the same argument it has again to move backward.
However, we can only repeat this process finitely often and eventually after possibly hitting some vertices of $H_1$ other than $w$ and  after possibly visiting vertices of blocks that precede $B$, $P$ hits $u$. Note that this happens  without $P$ having hit any vertex of $H_2$ prior to hitting $u$ (recall Lemma \ref{lm:SePaProperties2}(i)). 
 %
%
There are two cases.

The first case occurs when on traversing $P$ up to $u$,  
there is some edge of $B$ departing from $u$ that does not lead to $t_2$ 
without revisiting one of the previously visited vertices. 
Yet, since $P$ shares an edge with $B$,
by Proposition~\ref{prop:hitEdgeu}
there is an edge of $B$ that departs from $u$ and leads to $t_2$ 
without revisiting one of the previously visited vertices. 
Let $e_1$ be the first of the above edges and $e_2$ be the second one. 
For these edges, the same contradicting instance as in Proposition~\ref{prop:hitEdgeu} can be constructed. 
Path $P_1$ is an arbitrary simple $s_1\dsh t_1$ path containing $e_1$, 
path $P_3$ is an arbitrary simple $s_1 \dsh t_1$ path containing $e_2$,
and path $P_2$ is constructed by following $P$ up to $u$ 
(instead of some internal vertex of $H_1$ from which $e_2$ was departing) 
and from there taking $e_2$ and then a path that leads to $t_2$ without revisiting vertices. 
The edge function assignment will be exactly the same.  
Diversity hurting, and thus the contradiction, will follow in the same way as in Proposition~\ref{prop:hitEdgeu}.
 
The second and more interesting case occurs when 
on following $P$ up to $u$, all edges of $B$ departing from $u$ can lead to $t_2$ 
without revisiting one of the previously visited vertices. 
Let $e_2$ be an edge of $H_2$ (departing from $u$) with this property, 
let $P_2$ be the simple path that follows $P$ up to $u$ 
and then follows some path through $e_2$ to go to $t_2$,
and let $P_3$ be a simple $s_1\dsh t_1$ path that follows an arbitrary $s_1\dsh u$ path and 
an arbitrary $v\dsh s_2$ path,
and between $u$ and $v$ follows a path that contains $e_2$. 
Let $P_1$ be any path that follows an arbitrary $s_1\dsh u$ path and an arbitrary  $v\dsh t_2$ path,
and between $u$ and $v$ follows a path that contains $w$, and therefore goes through $H_1$. 
Let $e_1$ be the edge of $P_1$ that departs from $w$. 
Note that, because of Proposition~\ref{prop:hitEdgeu},  
$e_1$ cannot lead to $t_2$ with a simple path, i.e.\ without visiting vertices on $P$ before $w$. 
 This is a key fact for the contradiction to come and relates to the extra caution needed for this proof in comparison with the proof of Proposition \ref{prop:hitEdgeu}. See Figure~\ref{fig:prop2App} for a high level description of the instance. The rest of the details can be found in the appendix, Section  \ref{app:propHitVertexProof}.
 %
%
\end{proof}
\end{proof}

\section*{Acknowledgments}
We thank the anonymous referees for their thoughtful feedback that helped  improve this work.

\bibliographystyle{named}
\bibliography{DiverseBiblio6}

\appendix

\section*{APPENDIX}

\setcounter{section}{0}

\section{Ommited proofs}\label{app:ommitedProofs}

\subsection{An instance without average-respecting demand where diversity hurts}\label{app:nonAvRes}

\begin{figure}[h]\center
\includegraphics[scale=0.25]{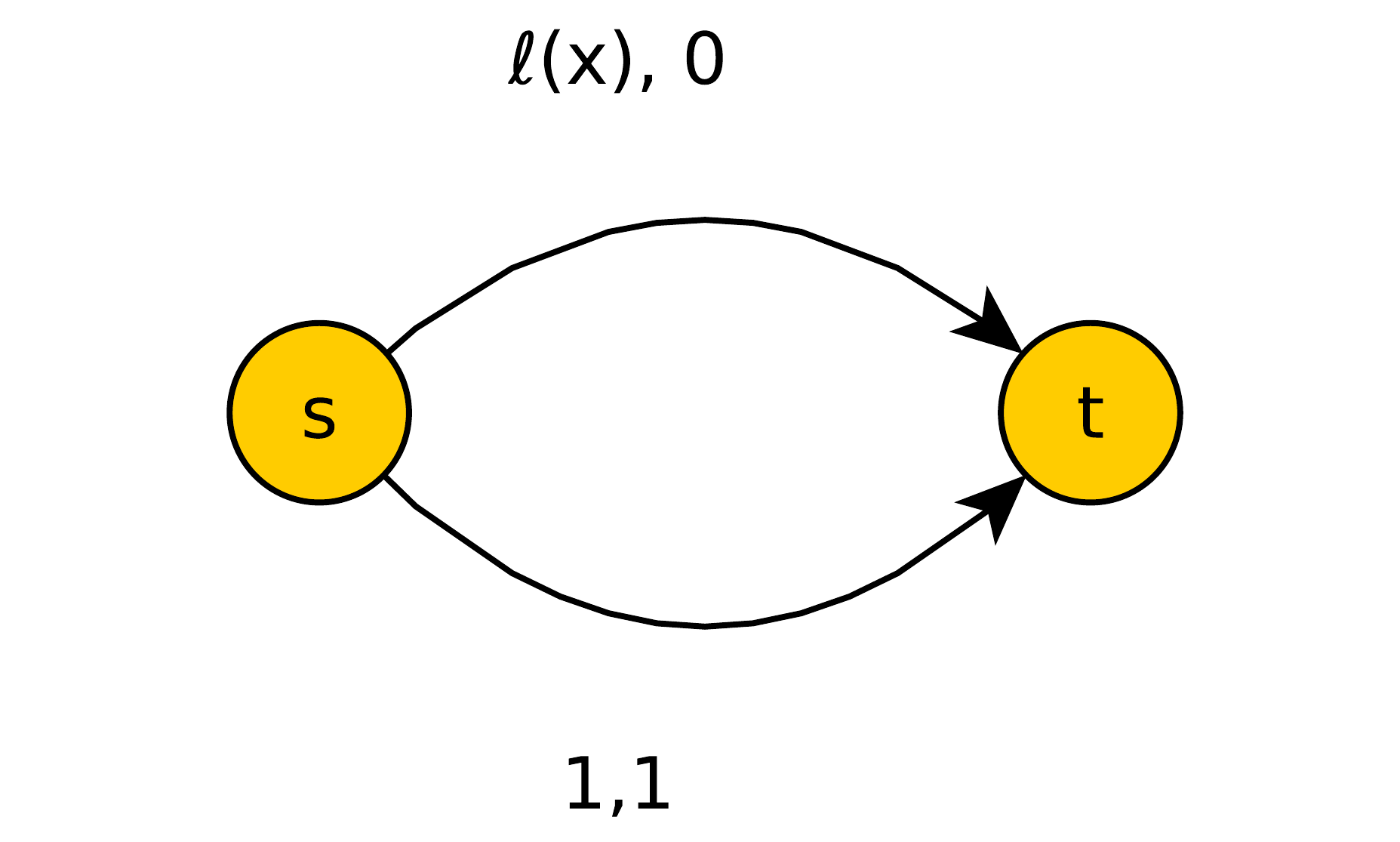}

\label{fig:Newexample}
\end{figure}
Let $\ell$ be any strictly increasing function with $\ell(1)=2$ and $\ell(10/9)=10$. Let there be two commodities, each of unit demand. The first commodity has demand that consists of homogeneous players with parameter equal to $10$. The second commodity has demand that consists of 1/9 players with parameter equal to $9$ and $8/9$ players with parameter equal to $0$. The heterogeneous equilibrium has cost equal to $12$ (only \ntrl{ }players are routed through the lower edge) while the homogeneous equilibrium has cost equal to $4$ (the first commodity uses only the upper edge and the second uses only the lower edge).

\hide{
\subsection{The proof of Lemma \ref{lm:SqrtProp}}\label{app:lm:SqrtProp}

\begin{gather}
\nonumber
x-y+\sqrt{M+y^2-x^2}\geq \sqrt{M} \\
\nonumber
\Longleftrightarrow \\
\nonumber
\cancel{x^2}-2xy+y^2+2(x-y)\sqrt{M+y^2-x^2}+\cancel{M}+y^2-\cancel{x^2}\geq \cancel{M} \\
\nonumber
\Longleftrightarrow\\
\nonumber
-2y(x-y)+2(x-y)\sqrt{M+y^2-x^2}\geq 0 \\ 
\nonumber
\Longleftrightarrow\\
\nonumber
\sqrt{M+y^2-x^2}\geq y
\\
\nonumber
\Longleftrightarrow
\\ 
\label{eqn:final-sqrt-relat}
 M-x^2\geq 0.
\end{gather}
\eqref{eqn:final-sqrt-relat} holds by hypothesis. 
%

}

\subsection{The proof of Lemma \ref{lm:greaterflow}}
\label{app:lm:greaterflow}

\hide{
We will consider only the edges of $G$ that are used under $x$ or $y$ and we will prove that in that subnetwork there exists an $\stot$  path $p$  such that $\forall e \in p: x_e\geq y_e$. All edges being used by $x$ or $y$ implies that $\forall e \in p: x_e>0$ and surely any $\stot$ path of a subnetwork of $G$ is an \stot path of $G$. 

That being said, WLOG we assume that all edges of $G$ are used by $x$ or $y$. 
}
The proof is by induction on the decomposition of the series-parallel network. 
The base case of $G$ being a single edge $e$ is trivial as $x_e=d_1\geq d_2=y_e$.

For the inductive step, first suppose that $G$ is a series combination of two series-parallel networks $G_1$ and $G_2$. 
For $i=1,2$, let  $x^i$ be the restriction of flow $x$ to $G^i$,
and $y^i$ the restriction of flow $y$. 
By the inductive hypothesis, 
there is an $\stot$ path $P_i$ in $G_i$ such that for all $e \in P_i, x^i_e\geq y^i_e$. 
It suffices to set $P$ to be the concatenation of $P_1$ with $P_2$.
 
Now assume that $G$ is a parallel combination of two series-parallel networks $G_1$ and $G_2$. 
Again, for $i=1,2$, let $x^i$ be the restriction of flow $x$ to $G^i$, and $y^i$ of flow $y$.
We may assume WLOG that the flow $d_1^1$ that $G_1$ receives in $x^1$ is at least as large
as the flow $d_2^1$ that it receives in $y^1$, and further that $d_1^1 >0$.
By the inductive hypothesis applied to $G_1$ with demands $d_1^1$ and $d_2^1$,
we obtain that there exists an $\stot$ path $P$ such that for all
$ e \in P, x^1_e\geq y^1_e$
and this implies that for all $ e \in P, x_e\geq y_e$, as needed.

\subsection{The proof of Theorem \ref{thm:SePaNecessary}}
\label{app:thm:SePaNecessary}

If $G$ is not series-parallel then the Braess graph can be embedded in it 
(see e.g. \cite{DBLP:journals/geb/Milchtaich06} or \cite{Valdes:1979:RSP:800135.804393}). 
Thus, starting from the Braess network $G_B$, by subdividing edges, 
adding edges and extending one of the terminals by one edge, we can obtain $G$. 
Fix such a sequence of operations. 
For the given heterogeneous demand, we start from the Braess instance given by Proposition \ref{prop:BraMayHurt} 
and apply the sequence of operations one by one.
Each time an edge addition occurs, 
we give the new edge a constant latency function equal to some large $M$ and \stdev{ }equal to $0$, 
each time an extension of the terminal occurs we give the new edge a constant latency and \stdev{ }equal to $0$, 
while each time an edge division occurs, if it is an edge with latency function $M$ 
we give both edges latency function equal to $M$ and \stdev{ }equal to $0$, 
otherwise we give one of the two edges the latency and the \stdev{ }functions of the edge that got divided 
and we give the other one a constant latency and \stdev{ }equal to $0$. 

It is not hard to see that taking $M=2(3+\bar{r})$, i.e.\ 
more than double the heterogeneous cost of the instance of Proposition \ref{prop:BraMayHurt},
(or $M=2(3A+\bar{r})$ if we must only use affine  functions),
 suffices to ensure that all edges having latency $M$ and \stdev{ }$0$
receive zero flow in both the heterogeneous equilibrium $g$ and the homogeneous equilibrium $f$. 
In more detail, the only $\stot$ routes that may have cost $<M$ 
are those that starting from $s$ reach, with zero cost, 
some $s'$ that corresponds to the $s$ of the instance of Proposition~\ref{prop:BraMayHurt}, 
follow some path that corresponds to one of the upper, zig-zag, or lower paths of the instance of 
Proposition~\ref{prop:BraMayHurt}, with corresponding cost, 
reach some $t'$ that corresponds to $t$ of the instance of Proposition~\ref{prop:BraMayHurt}
and from there reach $t$ with zero cost. 
This can be formally proved by induction on the number of embedding steps.
Thus,  $C^{ht}(g)>C^{hm}(f)$ can be derived in the same way as in Proposition~\ref{prop:BraMayHurt}
and that is enough to prove the theorem.   

\hide{
\subsection{The complete proof of Lemma \ref{lm:MultiSePaPath}}
\label{app:lm:MultiSePaPath}

Consider Commodity $i$.
Let $G_i=s_iB_1v_1\ldots v_{b_i-1}B_{b_i}t_i$ be its block representation.
Consider an arbitrary $B_j$ with terminals $v_{j-1}$ and $v_j$. 
Because $G$ is block-matching, any other Commodity $l$ either contains $B_j$ {as a block} in its block representation 
or contains none of its edges.
Also, recall that, as explained in the preliminaries section, if 
$G_l$ contains $B_j$, it has the same terminals $v_{j-1}$ and $v_j$. 
This implies that under any routing of the demand, either all of $l$'s demand goes through $B_j$ or none of it does. 
This means that under both equilibria $g$ and $f$, 
the total traffic routed from $v_{j-1}$ to $v_j$ through $B_j$ is the same. 
Since $B_j$ is series-parallel, by applying Lemma \ref{lm:greaterflow} 
with $x=f|_{B_j}$ and $y=g|_{B_j}$, i.e.\ the restriction to $B_j$ of $f$ and $g$ respectively, 
we can conclude there exists a $v_{j-1}\mbox{--}v_j$ path $Q_{j}$
such that for all $ e \in Q_{j}, f_e>0$ and $f_e\geq g_e$.



As $j$ was arbitrary we may do this for all $j\in [b_i]$. By letting $P_i$ be the concatenation of all these paths,
 i.e.\ the  $s_i\mbox{--}t_i$ path $P_i=Q_{1}, \ldots, Q_{b_i}$, we obtain: for all $ e \in P_i$, $f_e>0$ and $f_e\geq g_e$. 
By assumption, for any $r\in [0,r_{\max}]$, $\ell_e$ and $\ell_e+r\sigma_e$ are non-decreasing
and thus for all $ e\in P_i$, $\ell_e(f_e)\geq \ell_e(g_e)$ and $\ell_e(f_e)+r\sigma_e(f_e)\geq \ell_e(g_e)+r\sigma_e(g_e)$.
As for all $ e\in P_i$, $\ell_e(f_e)\geq \ell_e(g_e)$, it follows that  $\ell_{p_i}(g)\leq \ell_{p_i}(f)$.
It remains to show that for any $r\in [0,r_{\max}]$, $c^r_{p_i}(g)\leq c^r_{p_i}(f)$.


To this end, let $A=\{e\in P_i: \sigma_e(f_e)\geq \sigma_e(g_e)\}$ and $B=\{e\in P_i:\sigma_e(g_e) > \sigma_e(f_e)\}$, and note that $A\cup B$ contains exactly the edges of $P_i$. For $c^r_{p_i}(f)$ we have
\[c^r_{p_i}(f) =\sum_{e\in p_i}\ell_e(f_e)+r\sqrt{\sum_{e\in p_i}\sigma^2_e(f_e)} = \sum_{e\in A}\ell_e(f_e)+\sum_{e\in B}\ell_e(f_e)+r\sqrt{\sum_{e\in A}\sigma^2_e(f_e)+\sum_{e\in B}\sigma^2_e(f_e)}\]
By using the definition of set $A$ and that the $\ell_e$'s are increasing  we get
\[c^r_{p_i}(f) \geq \sum_{e\in A}\ell_e(g_e)+\sum_{e\in B}\ell_e(f_e)+r\sqrt{\sum_{e\in A}\sigma^2_e(g_e)+\sum_{e\in B}\sigma^2_e(f_e)}.\]
Applying  $\ell_e(f_e)+r\sigma_e(f_e)\geq \ell_e(g_e)+r\sigma_e(g_e)$ and putting the $r$ inside the square root, we further deduce
\[c^r_{p_i}(f) \geq \sum_{e\in A}\ell_e(g_e)+\sum_{e\in B}\big(\ell_e(g_e)+r\sigma_e(g_e)-r\sigma_e(f_e)\big)+\sqrt{\sum_{e\in A}r^2\sigma^2_e(g_e)+\sum_{e\in B}r^2\sigma^2_e(f_e)},\]
which by adding and subtracting $\sum_{e\in B}\sigma^2_e(g_e)$ inside the square root  yields
\[c^r_{p_i}(f) \geq \sum_{e\in p_i}\ell_e(g_e)+\sum_{e\in B}\big(r\sigma_e(g_e)-r\sigma_e(f_e)\big)+\sqrt{\sum_{e\in p_i}r^2\sigma^2_e(g_e)+\sum_{e\in B}\big(r^2\sigma^2_e(f_e)-r^2\sigma^2_e(g_e)\big).}\]
On applying Lemma \ref{lm:SqrtProp} $|B|$ times, once for each $e\in B$, with $x=r\sigma_e(g_e)$, $y=r\sigma_e(f_e)$ and $M$ the  remainder under the square root after  subtracting  $r^2\sigma^2_e(f_e)-r^2\sigma^2_e(g_e)$,
 
 we finally get  
\[c^r_{p_i}(f) \geq \sum_{e\in p_i}\ell_e(g_e)+\sqrt{\sum_{e\in p_i}r^2\sigma^2_e(g_e)}=c^r_{p_i}(g).\]

Note that at each step, a new edge $e'$ is considered and a new $M=M_{e'}$ is defined.
Note that it needs to satisfy $M \ge x^2$, i.e.\ $M_{e'}\geq r^2\sigma^2_{e'}(g_{e'})$. 
To see this holds, note that  $M_{e'}\geq r^2\sigma^2_{e'}(g_{e'})$, because  the edges of $A$ and the edges of $B$ that have been considered in previous steps contribute  $r^2\sigma^2_e(g_e)$ to $M$, and the edges of $B$ that have not yet been considered contribute $r^2\sigma^2_e(f_e)$ to $M$, while $e'$ contributes $r^2\sigma^2_{e'}(g_{e'})$. 


\hide{
To this end, let  $\delta_e=\ell_e(f_e)-\ell_e(g_e)$.
Note that $\delta_e\geq 0$ for all $e\in P$. 
As for all $ e\in P$, $\ell_e(f_e)+r\sigma_e(f_e)\geq \ell_e(g_e)+r\sigma_e(g_e)$,
it follows that for all $e\in P$, $r\sigma_e(f_e)\geq r\sigma_e(g_e)-\delta_e$.
Consequently, by squaring and summing over the edges of $P$, we obtain
$\sum_{e\in p}r^2\sigma^2_e(f_e)\geq \sum_{e\in p}(r\sigma_e(g_e)-\delta_e)^2$,
 which, on taking square roots and using the definition of $\delta_e$, 
 implies $\sum_{e\in p}\ell_e(f_e)+r\sqrt{\sum_{e\in p}\sigma^2_e(f_e)}\geq \sum_{e\in p}\ell_e(g_e) +\sum_{e\in p}\delta_e+\sqrt{\sum_{e\in p}(r\sigma_e(g_e)-\delta_e)^2} $. 
By applying Lemma \ref{lm:SqrtProp} $|P|$ times, 
once for each $\delta_e$ (which are non negative), 
we obtain $\sum_{e\in p}\ell_e(g_e) +\sum_{e\in p}\delta_e+\sqrt{\sum_{e\in p}(r\sigma_e(g_e)-\delta_e)^2}\geq \sum_{e\in p}\ell_e(g_e) +\sqrt{\sum_{e\in p}r^2\sigma^2_e(g_e)} $,  
which yields $c^r_{p}(f)=\sum_{e\in p}\ell_e(f_e)+r\sqrt{\sum_{e\in p}\sigma^2_e(f_e)}\geq \sum_{e\in p}\ell_e(g_e)+r\sqrt{\sum_{e\in p}\sigma^2_e(g_e)}=c^r_{p}(g)$. 
}

Finally, to show that $P_i$ is used by $f$, as stated in the lemma, we recall that for all $e \in P_i,$ $ f_e>0$.

}

\subsection{The proof of Lemma \ref{lm:SePaProperties1}}
\label{app:lm:SePaProperties1}

For (i), if there were such an edge then a simple $s_i\dsh t_i$ path would be created that avoids the separators that lie 
between $B_1$ and $B_2$, contradicting the definition of $G_i$'s  block structure. 


For (ii),  let $u$ and $v$ be two vertices in $G_i$ such that there is no simple $s_i \dsh t_i$ path in $G_i$ 
that contains both of them. 
This implies that there is no edge between them in $G_i$.
Now, in the series-parallel decomposition of $G_i$, let $B$ be the smallest series-parallel subnetwork 
containing both $u$ and $v$. 
By the choice of being smallest and the fact that there is no edge in $G_i$ between $u$ and $v$, 
$B$ must be a composition of a $B_1$ containing $u$ and a $B_2$ containing $v$. 
$B_1$ and $B_2$ are not connected in series because then there would be an $s_i\dsh t_i$ path in $G_i$ 
containing both $u$ and $v$. 
Therefore, $B_1$ and $B_2$ are connected in parallel;
thus there cannot be any edge in $G$ between $u$ and $v$
or else it would belong in some simple $s_i \dsh t_i$ path 
and therefore belong to $G_i$, violating $G_i$'s series-parallel structure.

\subsection{The proof of Lemma \ref{lm:SePaProperties2}}

\label{app:lm:SePaProperties2}

The proofs of (i) and (ii) are by induction on the length of path $u \dsh w$. 

For (i), if the path has length equal to $1$ then it is a simple edge, i.e.\ edge $(u,w)$, and thus $u=v_{k-1}$, 
because of Lemma~\ref{lm:SePaProperties1}(i), and then (i) holds. 

Now suppose inductively that the result holds for paths of length up to $l-1$.
Let $u \dsh w$ be a path of length $l$. 
Let $(u,x)$ be the first edge on this path.
Note that by Lemma~\ref{lm:SePaProperties1}(i),
$x$ cannot belong to any successor of $B_k$ 
(unless $u=v_{k-1}$ and $x=v_k$, in which case (i) would hold). 
If $x$ belongs to $B_k$ and is not $v_{k-1}$, then by Lemma \ref{lm:SePaProperties1}(i),
$u=v_{k-1}$ and thus (i) holds. 
If $x=v_{k-1}$ or $x$ does not belong to $B_k$
(which implies it belongs to a predecessor of $B_k$)
then the inductive hypothesis holds for the length $l-1$ path $x \dsh w$,
yielding the desired edge exiting $v_{k-1}$. 
Thus (i) also holds for path $u \dsh w$.

For (ii), if the path has length equal to $1$ then it is a single edge, i.e.\ edge $(w,u)$, 
and thus by Lemma \ref{lm:SePaProperties1}(i)$u=v_{k}$, and (ii) holds. 
%
Now suppose inductively that the result holds for paths of length up to $l-1$.
Let  $w\dsh u$ be a path of length $l$. 
Let $(w,x)$ be the first edge on this path.  
Either $x=v_k$ or $x$ is an internal vertex in $B_k$, by Lemma \ref{lm:SePaProperties1}(i). 
If $x=v_k$ then (ii) holds. 
If $x$ is an internal vertex in $B_k$, then the inductive hypothesis applies t the length $l-1$ path $x \dsh w$.
Thus t(ii) also holds for path $u\dsh w$.

(iii) follows from (i) and (ii). 
Consider an arbitrary simple $v_{k-1}\dsh v_k$ path $P$. 
Path $P$ does not contain a vertex from any preceding block, for if it did, then to re-enter $B_k$ so as to reach $w$, 
according to (i), it would go through $v_{k-1}$ again, and then it would not be a simple path.
Also, aside its endpoints, $P$ does not contain a vertex from any succeeding block, for if it did, then to leave $B_k$, 
according to (ii), it would reach $v_k$ before reaching it again at the end, and then it would not be a simple path.
Thus, a path that follows some simple $s_i \dsh v_{k-1}$ path,
then follows $P$ and then follows some simple $v_{k} \dsh t_i$ path is a simple $s_i\dsh t_i$ path.
Consequently, $P$ lies entirely in $B_k$.

\subsection{The complete proof of Proposition \ref{prop:hitVertexu}}\label{app:propHitVertexProof}

Consider an arbitrary simple $s_2\dsh t_2$ path $P$ that shares some edge with $B$. 
WLOG we may assume, again, that $P$ only visits vertices of $G_1$ and $s_2$, $t_2$, 
as we may handle subpaths of $P$ that have vertices that lie outside $G_1$ as edges. 
Let $B$ be the parallel combination of $H_1$ and $H_2$ and, assuming that the proposition does not hold, 
let $w$ be the first vertex of $P$ (before it reaches $u$) that belongs to the internal portion of $B$, 
and suppose WLOG that $w$ lies in $H_1$. 
By Lemma~\ref{lm:SePaProperties1}(ii) and Proposition \ref{prop:hitEdgeu}, the edge of $P$ exiting $w$ cannot go toward $t_1$, i.e.\ forward, 
as then it would traverse an edge of $B$ for the first time  that does not depart form $u$ (recall also Lemma~\ref{lm:SePaProperties2}(ii)).
Thus, it has to go toward $s_1$, i.e.\ backward, 
either staying in $H_1$ or going back to one of the preceding blocks of $B$. If by going backward it stays in $H_1$ then by the same argument it has again to move backward.
However, we can only repeat this process finitely often and eventually after possibly hitting some vertices of $H_1$ other than $w$ and  after possibly visiting vertices of blocks that precede $B$, $P$ hits $u$. Note that this happens  without $P$ having hit any vertex of $H_2$ prior to hitting $u$ (recall Lemma \ref{lm:SePaProperties2}(i)). 
 %
%
There are two cases.

The first case occurs when on traversing $P$ up to $u$,  
there is some edge of $B$ departing from $u$ that does not lead to $t_2$ 
without revisiting one of the previously visited vertices. 
Yet, since $P$ shares an edge with $B$,
by Proposition~\ref{prop:hitEdgeu}
there is an edge of $B$ that departs from $u$ and leads to $t_2$ 
without revisiting one of the previously visited vertices. 
Let $e_1$ be the first of the above edges and $e_2$ be the second one. 
For these edges, the same contradicting instance as in Proposition~\ref{prop:hitEdgeu} can be constructed. 
Path $P_1$ is an arbitrary simple $s_1\dsh t_1$ path containing $e_1$, 
path $P_3$ is an arbitrary simple $s_1 \dsh t_1$ path containing $e_2$,
and path $P_2$ is constructed by following $P$ up to $u$ 
(instead of some internal vertex of $H_1$ from which $e_2$ was departing) 
and from there taking $e_2$ and then a path that leads to $t_2$ without revisiting vertices. 
The edge function assignment will be exactly the same.  
Diversity hurting, and thus the contradiction, will follow in the same way as in Proposition~\ref{prop:hitEdgeu}.
 
The second and more interesting case occurs when 
on following $P$ up to $u$, all edges of $B$ departing from $u$ can lead to $t_2$ 
without revisiting one of the previously visited vertices. 
Let $e_2$ be an edge of $H_2$ (departing from $u$) with this property, 
let $P_2$ be the simple path that follows $P$ up to $u$ 
and then follows some path through $e_2$ to go to $t_2$,
and let $P_3$ be a simple $s_1\dsh t_1$ path that follows an arbitrary $s_1\dsh u$ path and 
an arbitrary $v\dsh s_2$ path,
and between $u$ and $v$ follows a path that contains $e_2$. 
Let $P_1$ be any path that follows an arbitrary $s_1\dsh u$ path and an arbitrary  $v\dsh t_2$ path,
and between $u$ and $v$ follows a path that contains $w$, and therefore goes through $H_1$. 
Let $e_1$ be the edge of $P_1$ that departs from $w$. 
Note that, because of Proposition~\ref{prop:hitEdgeu},  
$e_1$ cannot lead to $t_2$ with a simple path, i.e.\ without visiting vertices on $P$ before $w$. 
 This is a key fact for the contradiction to come. See Figure~\ref{fig:prop2App}.

To create the instance we proceed as in Proposition~\ref{prop:hitEdgeu}. 
Let  $d_1=d_2=1$ be the total demands for Commodities $1$ and $2$ respectively 
and let all other commodities have $0$ demand. 
Let $G_1$'s demand consist of $\tfrac 34$ \ntrl{ }players and $\tfrac 14$  players 
with \divpar{ }equal to $4$, and let the demand of $G_2$ consist of players 
with \divpar{ }equal to $1$.  
Assign edge $e_1$ the constant latency function $\ell_1(x)=1$ 
and the constant \stdev{ }function $\std_1(x)=2$. 
Assign edge $e_2$ the constant \stdev{ }$\std_2=0$,
and as latency function any $\ell_2$ that is continuous and strictly increasing 
with $\ell_2(1)=3$ and $\ell_2(\frac{5}{4})=9$. 
Assign all other edges the constant \stdev{ }$\std_2=0$.
To all other edges of $P_1$ and $P_3$ that lie inside $B$ assign $0$ latency functions. 
To all edges that depart from a vertex of $P_1$ or $P_3$ that lies on $P_2$ 
assign latency functions equal to some big constant $N$, say $N=24$
(i.e.\ double the heterogeneous equilibrium cost of Proposition~\ref{prop:JustTwoPaths}). 
For all remaining edges on $P_2$, assign $0$ latency functions. 
Finally, to all remaining edges, assign constant latency functions equal to $M$,
where $M$ is defined to be $2|V(G)|\cdot N$.

\begin{figure}\center
\includegraphics[scale=0.34]{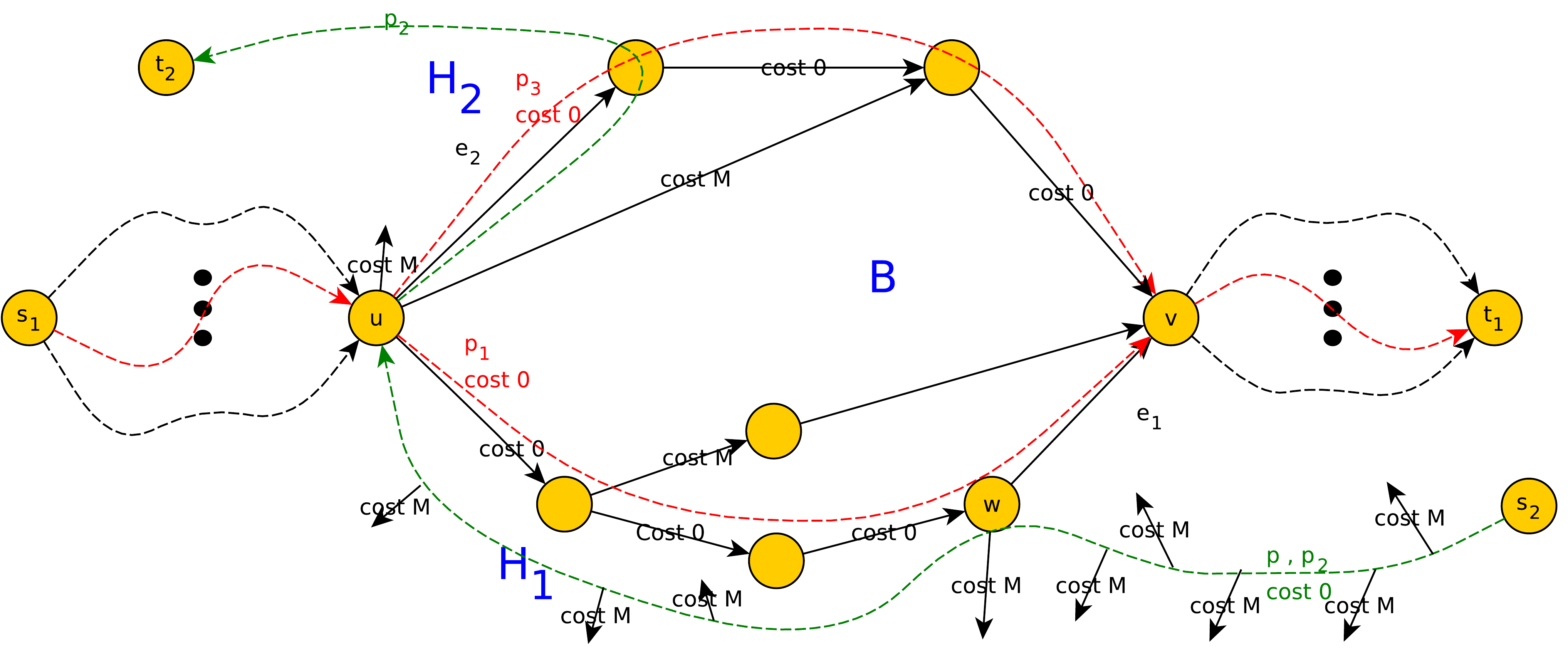} 
\caption {Sample network for Proposition \ref{prop:hitVertexu}}
\label{fig:prop2App}
\end{figure}
Note that (as in Proposition~\ref{prop:hitEdgeu}) all edges other than $e_2$ have constant edge functions. 
Thus for both equilibria, Commodity $1$ will have a common cost $C^{B^-}$ 
that will be paid on blocks other than $B$. 
Also, for Commodity $2$, any path that costs less than $M$ will follow path $P_2$ up to and including $e_2$. 
For this, it suffices to show that all edges departing from vertices of $P_1$ in between $u$ and $w$ 
have cost $M$, as then   if some portion of the flow, after visiting $u$, 
deviates and follows $P_1$ instead of $e_2$, then in order to avoid edges of cost $M$ 
it will reach $w$ which (as mentioned earlier) would be a dead end because of Proposition~\ref{prop:hitEdgeu}. This is proved in the next paragraph.
Given that, if 
both the homogeneous and the 
 heterogeneous equilibria are 
to cost less than $M$ 
then all of Commodity $2$'s flow will go through $e_2$
 and from there follow a shortest path to $t_2$
(recall that all other edges have constant edge functions) of cost $C^{e_2^-}$ say. Now note that $C^{e_2^-}$ will not be more than the cost of path $P_2$ following edge $e_2$,
which is bounded by
$|V(G)|N=\frac{M}{2}$,
and thus the path portion with cost  $C^{e_2^-}$ is preferable to any path with an edge of cost $M$.

\hide{
Note that $C^{e_2^-}$ cannot be more than $|E(G)|N=\frac{M}{2}$. 
This follows by the fact that edges with cost $M$ cannot be visited 
and the maximum cost among all other edges is at most $N$.\footnotemark[\getrefnumber{footnt:EdgesUpperCost}] The fact that edges with cost $M$ 
cannot be visited  
is partially true
because of the simplicity of $P_2$ (which was the argument in Proposition~\ref{prop:hitEdgeu}), 
but in this case an extra argument is needed for the edges that depart from vertices of $P_1$ 
in between $u$ and $w$. 
We need a guarantee that these edges do not belong in $P_2$, i.e.\ after $P_2$ visits $u$ and $e_2$, 
it does not go (at any later point) to one of these vertices to take one of their departing edges. }
%
 
 Let $E_M$ be the set of  edges that depart from vertices of $P_1$ in between $u$ and $w$. 
  We want to prove that edges in $E_M$ have cost $M$. 
By proposition~\ref{prop:hitEdgeu}, the vertices of $P$ that belong to $P_1$ in between $w$ and $u$, 
$w$  included, 
have no departing edge that belongs to $B$ and leads to $t_2$ without traversing preceding vertices of $P$. 
This implies that $P$ and any other simple $s_2\dsh t_2$ path that follows $P$ up to $u$, 
cannot have some simple $u\dsh v$ path of $H_2$ as a subpath --- call this Property $X$ --- 
otherwise, by letting $P'$ be such a path,  following $P$ up to $w$, 
then picking any path inside $H_1$ that leads to $v$, and from there reaching $t_2$ via $P'$, 
creates a simple $s_2\dsh t_2$ path that has its first edge in $B$ departing from an internal vertex, thereby contradicting Proposition~\ref{prop:hitEdgeu}.  
But if $P_2$ is to contain some edge in $E_M$ then it has to leave $H_2$ and go to $H_1$. 
The only way to do that and keep its simplicity, because of  
Lemmas~\ref{lm:SePaProperties1}(ii) and~\ref{lm:SePaProperties2}(i), 
is by going to a block that succeeds $B$ and then coming back to $H_1$. 
But  by Lemma~\ref{lm:SePaProperties2}(ii),
going to a block that succeeds $B$ requires 
going through $v$ first.
Thus $P_2$ would have a complete $u\dsh v$ subpath that does not visit any other block, which 
by Lemma~\ref{lm:SePaProperties2}(iii) belongs in $B$ and thus in $H_2$, 
contradicting Property $X$. Thus $P_2$ does not contain any edge in $E_M$.


Now, we 
compute the costs of the equilibria 
The heterogeneous equilibrium $g$, 
for Commodity $1$, routes all the flow through the shortest $s_1\dsh u$ and $v\dsh t_1$ paths,
inside $B$ routes $\tfrac 34$ units of flow through $P_1$, and  $\tfrac 14$ units of flow through $P_2$, 
and routes all the flow of Commodity $2$ through $e_2$, via $P_2$ up to $e_2$ 
and after $e_2$ 
via the shortest path to $t_2$. 
The \ntrl{ }players of $G_1$ compute a cost for $P_1$ equal to $1+C^{B^-}$ 
and a cost for $P_3$ equal to $9+C^{B^-}$, and thus prefer $P_1$ to $P_3$,
while the remaining players of $G_1$ compute a cost equal to $9+C^{B^-}$ for both $P_1$ and $P_3$,
and thus prefer $P_3$ to $P_1$ (recall that $\ell_2$ is strictly increasing). 
The other paths have cost at least $N+C^{B^-}=24+C^{B^-}$ and thus are not preferred by any type of player. 
The players of Commodity $2$ pay cost equal to $9+C^{e_2^-}$ 
(the cost of $e_2$ plus the cost after it) and thus prefer staying on $e_2$ 
rather than paying at least $M$ to avoid $e_2$ 
(recall $C^{e_2^-}\leq \frac{M}{2}$). 
Also recall that on $P_2$, for the vertices before $e_2$, there might be edges leaving $P_2$ that have
cost $0$ 
(these are edges of $P_1$
),
but because of  Proposition~\ref{prop:hitEdgeu} 
they cannot lead to $t_2$ without visiting preceding vertices. 
Putting it all together this routing is indeed the heterogeneous equilibrium with cost 
$C^{ht}(g)=1\cdot \tfrac 34 d_1+9\cdot \tfrac 14 d_1+d_1C^{B^-}+9d_2+d_2C^{e_2^-}=12+C^{B^-}+C^{e_2^-}$.

The homogeneous equilibrium $f$ routes all the demand of $G_1$ through  $e_1$, 
via $P_1$ and the shortest $s_1\dsh u$ and $v\dsh t_1$ paths, 
and routes all the flow of Commodity $2$ through $e_2$, via $P_2$ up to $e_2$,
and after $e_2$ 
 via the shortest path to $t_2$. 
The average \divpar{ }for Commodity $1$'s demand equals $1$,
and thus $P_1$ and $P_3$ are both computed to cost $3+C^{B^-}$, 
while all other paths cost at least $N+C^{B^-}=24+C^{B^-}$ and thus are avoided. 
In the same way as above, the players of Commodity $2$ pay cost equal to $3+C^{e_2^-}$ 
(the cost of $e_2$ plus the cost after it) and thus prefer staying on $e_2$ rather than 
paying at least $M$ to avoid it. 
Thus the cost $C^{hm}(f)$ of the homogeneous equilibrium is 
$C^{hm}(f)=3\cdot d_1+d_1\cdot  C^{B^-}+3\cdot d_2+d_2\cdot C^{e_2^-}=6+C^{B^-}+C^{e_2^-}$. 
Consequently, $C^{ht}(g)>C^{hm}(f)$,  
contradicting the assumption that $G$ satisfies that under any demand and edge functions diversity helps. Consequently, the proposition holds. \hfill\qed

\end{document}